\newcounter{counter}
\theoremstyle{definition}
\newtheorem{dfn}[counter]{Definition}
\newtheorem{rmk}[counter]{Remark}
\theoremstyle{plain}
\newtheorem{thm}{Theorem}
\newtheorem{lem}[counter]{Lemma}
\newtheorem{cor}[counter]{Corollary}
\newcommand{\End}{\mathop{\mathrm{End}}}
\newcommand{\Sym}{\mathop{\mathrm{Sym}_2}}
\newcommand{\Symp}{\mathop{\mathrm{Sym}_2^+}}
\newcommand{\id}{\mathop{\mathbf{id}}\nolimits}
\newcommand{\ActSym}
{\mathop{\mathrm{Act}_{\mathrm{Sym}}}}
\newcommand{\ActSec}
{\mathop{\mathrm{Act}_{\mathrm{Sec}}}}
\newcommand{\ActFun}
{\mathop{\mathrm{Act}_{\mathrm{Fun}}}}
\newcommand{\An}{\mathop{\mathit{An}}}
\title{Convolutions and Gaussians in Renormalization}
\author{Raymond Puzio,
        Sam McCrosson}
\begin{document}
\maketitle

\begin{abstract}
The Kadanoff-Wilson-Fisher approach to renormalization is based upon studying the renormalization transform, which may be described as an action of the monoid \(\mathbb{R}^\times_{\ge 1}\) on a suitable space of interactions. It is typically computed by manipulating the path integral or the perturbation series.  Here we will present an alternative algebraic description of the renormalization transform. We treat the space of interactions as a semigroup under convolution and act on it with a Lie group associated with the quantum harmonic oscillator. 
\end{abstract}

\tableofcontents

\pagebreak

\section{Introduction} \label{Intro}

Renormalization started as a computational trick that physicists used to get finite answers to various superficially-divergent integrals describing physical quantities (electron charge, mass, etc.).\cite{huang2013critical}  Later on, Stueckelberg and Petermann \cite{stueckelberg1953normalization} and Gell-Mann and Low \cite{gell1954quantum} reinterpreted renormalization in terms of a semigroup of transformations on the space of coupling constants.  Extending this interpretation, Kadanoff, Wilson, and Fisher \cite{fisher1998renormalization} developed a framework for describing how interactions among systems change as the systems are analyzed at different scales.

Over time, the mathematical formalism of renormalization has been developed and refined.  Bogoliubov and collaborators produced a recursive procedure for computing renormalization via counterterms \cite{bogoliubov1960introduction, hepp1966proof, zimmermann2000convergence}.
Later on, Epstein and Glazer produced another such procedure. \cite{epstein1973role}  The mathematics underlying these procedures has since been reformulated in terms of
Hopf algebras \cite{ConnesKriemer1999}.
Proceeding in a different direction, Polchinski produced a differential equation which describes the continuous version of Wilsonian renormalization \cite{polchinski1984renormalization}.

In this paper, we present a novel perspective on renormalization that centers on the algebraic properties of Gaussian distributions under convolution. 
Specifically, in \autoref{Intro}, we give a brief overview of renormalization in QFT -- the reader familiar with the topic may want to skim this section for notation.
In \autoref{Alg}, we develop a Lie group $\mathsf{Osc}(V)$ and use it to enumerate the properties of Gaussian distributions that will be useful for describing renormalization.
In \autoref{sec:renorm}, we present renormalization as an action of the (Lie) monoid $\mathbb{R}^\times_{\geq 0}$ on $C^\infty(V)$ that factors through a one-parameter submonoid of $\mathrm{GL}(V)\ltimes \Symp(V)$, culminating in \autoref{eq:RenormSummary}.

To keep the exposition at an elementary level, we make several simplifying assumptions.  We only treat bosonic field theory in Euclidean signature, and we only prove results for finite-dimensional integrals. We'll treat the infinite-dimensional space of field configurations relevant to perturbation theory in a future work.

\subsection{Informal overview of quantum field theory}

We begin by reviewing some basic quantum field theory. 

Given a spacetime manifold $M$ (generally, $M=\mathbb{R}^4$), \textit{fields} are quantities defined at every point on $M$.  The space of all possible field values on $M$ are typically organized as a bundle $E\to M$.  A particular field configuration is a section of this bundle. The space of all field configurations will be denoted $V$.  In most examples from physics, the fibers are all vector spaces, so \(V\) is also a vector space. 

Some field configurations are more likely to occur than others. This is typically quantified by an \textit{action functional} $S: V \to \mathbb{R}$. We'll assume $S$ can be expressed as a sum of two terms, $S(\phi) = \phi \mathbf{Q}\phi + I(\phi)$ where $\mathbf{Q}\in V^*\otimes V^*$ describes the kinetic / free part of the theory, and $I(\phi)\in C^\infty(V)$ describes the potential / interacting part of the theory. We'll further assume $\mathbf{Q}$ has an inverse quadratic form $\mathbf{P}\in V\otimes V$ (in the sense of \autoref{eq:inverseForms}), which we'll call the \emph{propagator} of the theory.

For our purposes, a \emph{(continuum) (quantum field) theory} will be specified by its propagator $\mathbf{P}$ and interaction term $I$, which jointly specify its action.

\textit{Observables} are numerical quantities (real or complex) we can associate to field configurations. Formally, an observable is an integrable function $\mathcal{O}:V \to \mathbb{C}$.
For example, given a point $x\in M$, we can define an observable by evaluation: 
\begin{equation}
    \mathcal{O}_x(\phi) = \phi(x)
\end{equation}
If $M$ is the earth and the field is temperature, $\mathcal{O}_x$ just reports the temperature at $x$ of the field configuration $\phi$. 

Classically, Hamilton's \textit{principle of least action} posits that nature conspires to achieve whichever field configuration $\phi$ minimizes the action, which can be found computationally by setting the derivative of $S$ to 0. In the quantum setting, we entertain all possible field configurations, not just the one that minimizes the action. Instead, field configurations $\phi$ are weighted by a probability $\exp(-S(\phi))$, and if we want to know the expected value of an observable $\mathcal{O}$ we integrate its values over all field configurations, weighted by their probabilities:
\begin{equation}
    \langle \mathcal{O} \rangle
    = \frac{\int_{V} \mathcal{O}(\boldsymbol\phi) \exp \{-S(\boldsymbol\phi)\} \, d\mu(\boldsymbol\phi)}{\int_{V}  \exp \{-S(\boldsymbol\phi)
    \} \, d\mu(\boldsymbol\phi)}
    .
\end{equation}
The other fundamental quantities of interest are \textit{correlations} of observables. In brief, given two observables $\mathcal{O}_1$ and $\mathcal{O}_2$, we want to know how likely it is that they are correlated (related to each other). In a classical setting, this notion is redundant because there's a single optimal field configuration $\phi$ and the value of any two observables will be evaluated on the same field configuration, which means they are ``totally correlated."
In a quantum setting, we compute expectation values $\langle \mathcal{O}_1\rangle $ and $\langle \mathcal{O}_2\rangle$ and wonder how much the two expectations are drawing from the same field configurations. The easiest way to do this is to compute the $L^2$ inner product of observables. Indeed, we typically define correlation values to be this inner product:
\begin{equation}
    \langle \mathcal{O}_1,\mathcal{O}_2\rangle := \int_V \mathcal{O}_1(\phi)\mathcal{O}_2(\phi)\, \exp(-S(\phi))\,d\phi
\end{equation}
From this, if we want a probabilistic interpretation of how correlated the observables are (as a number between 0 and 1), we can use the standard formula for the correlation coefficient of two random variables in terms of moments (c.f. \cite{stats} p.265):
\begin{equation}
    \mathsf{corr}(\mathcal{O}_1,\mathcal{O}_2) = \frac{\langle \mathcal{O}_1,\mathcal{O}_2\rangle - \langle \mathcal{O}_1\rangle\langle \mathcal{O}_2\rangle}{\sqrt{\langle \mathcal{O}_1^2\rangle - \langle \mathcal{O}_1\rangle^2}\cdot\sqrt{\langle \mathcal{O}_2^2\rangle - \langle \mathcal{O}_2\rangle^2}}
\end{equation}
When the observables are given by evaluating the field at different points $x,y\in M$, the correlation $\langle O_x,\mathcal{O}_y\rangle$ tracks how correlated expected field values are at the different points $x$ and $y$. An important question in quantum field theory and statistical mechanics is determining how \(\mathsf{corr}(\mathcal{O}_x,\mathcal{O}_y)\) behaves as \(|x - y| \to \infty\).

However, when one proceeds to evaluate the expectation value for observables such as two-point correlations, one runs into difficulties.  If one takes \(I\) to be a typical non-trivial interaction term such as \(I(\boldsymbol{\phi}) = \int_M \phi (\boldsymbol{x})^4 \, d\boldsymbol{x}\), the integral over \(V\) diverges.  This difficulty can be traced to the fact that 
there are field configurations $\phi\in V$ that vary rapidly on short length scales, resulting in large values of $S(\phi)$ that cause the expectation value to diverge.

To remedy this misbehavior, physicists introduced the techniques of regularization and renormalization.  Regularization is the operation of modifying a physical theory so as to suppress degrees of freedom at short length scales. Renormalization is a technique for studying how a physical theory behaves above different length-scale cutoffs by assigning to each length scale an effective theory which describes physics above that length scale, and relating the effective theories above different length scales. Combining these ideas, one can replace a poorly behaved continuum theory by a family of effective theories indexed by length scale. These effective theories typically all have the same form but differ in the values of parameters. Furthermore, when one computes the values of observables in these theories, the results converge to sensible limits as the length scale approaches zero as opposed to diverging.

There are multiple ways to regularize a theory.  The one which we will use is known as the ``smooth momentum cut-off'' (c.f. \cite{costello2011renormalization} p.42 or \cite{huang2010quantum} p.412) and consists of replacing the propagator \(\mathbf{P}\) with a propagator \(\mathbf{P}_{L}\) in which the contribution of length scale values smaller than \(L\) has been damped them by means of a factor which rapidly decreases to zero as \(L\) approaches zero. 
This class includes several popular choices, such as Pauli-Villars regularization and heat kernel regularization.  For instance, in flat Euclidean space, the heat kernel regularized propagator is given as
\begin{equation}
  P_{L} (\mathbf{x}, \mathbf{y})
  = \int_{L}^{\infty}  (4 \pi l)^{-n/2}
  \exp\left(- \frac{|\mathbf{x} - \mathbf{y}|^2}{4l} \right) \, dl
\end{equation}
An important feature of both Pauli-Villars regularization and heat kernel regularization is that they are equivariant under dilation and monotonic.  By monotonicity, we mean that \(\mathbf{P}_{x} - \mathbf{P}_{y}\) is a positive quadratic form when \(x < y\).  For the heat kernel regularization, this property follows immediately from the fact that the integrand is positive.  The effect of dilation on the heat kernel regularization works out to be
\begin{align}
   P_{c L} (\mathbf{x}, \mathbf{y})
   &= \int_{c L}^{\infty}  (4 \pi l)^{-\tfrac{n}{2}} \,
  \exp\left(- \frac{|\mathbf{x} - \mathbf{y}|^2}{4l} \right) \, dl \\
  &= c^{1-\tfrac{n}{2}} \int_{L}^{\infty} \,
  (4 \pi l)^{-\tfrac{n}{2}} \,\exp\left(- \frac{|\mathbf{x}
      - \mathbf{y}|^2}{4c l} \right) \, dl \\
    &= c^{\tfrac{2-n}{2}} P_{L} (\mathbf{x}c^{-1/2}, \mathbf{y}c^{-1/2})\label{eq:HKequivar}
\end{align}

Henceforth, we shall assume that our cut-off is equivariant under dilation and monotonic.  In algebraic terms, we may describe this equivariance in terms of an action of the multiplicative group of reals on the vector space of fields $V$, i.e. 
\begin{align}
    \mathbf{T}:\mathbb{R}^\times_{>0} \times V &\to V\\
    \mathbf{T}_c(\mathbf{v}) := \mathbf{T}(c,\mathbf{v}) &= c^\frac{2-n}{4}\mathbf{v}(c^{-1/2}\mathbf{x})
\end{align}
which we can use to define an action on linear operators, \((\mathbf{T}\otimes\mathbf{T}) \colon \mathbb{R}^{\times}_{>0} \times \Symp (V) \to  \Symp (V)\).  In terms of this action, we may re-express the conclusion of \autoref{eq:HKequivar} as \(\mathbf{P}_{c L} = (\mathbf{T}_{c}\otimes \mathbf{T}_c) \mathbf{P}_{L}\).  

Once a theory is regularized to suppress degrees of freedom below length-scale $L_0$, one can renormalize to produce an effective theory for length scale \(cL_0\) by carrying out a three step procedure (c.f. \cite{huang2010quantum} p.320):

\begin{enumerate} \label{threeSteps}
    \item ``Make blocks" : Separate out the degrees of freedom at a length scale less than \(cL_0\) from the degrees of freedom at a length scale larger than \(cL_0\).
    \item ``Coarse grain" : Average away all features at length scales smaller than \(cL_0\).  
    \item Rescale the theory from \(cL_0\) to a fiducial length scale \(L_0\)
\end{enumerate}
The calculations for implementing these three steps with a smooth cut-off are well known and can be found in multiple textbooks.  Before presenting our version, we need to define some notation.

\subsection{Notation}\label{sec:notation}

To set up some notation, let $\Sym (V) \subset V \otimes V$ denote the vector space of symmetric 2-tensors on $V$, and let $\Symp (V)$ denote the positive-definite symmetric 2-tensors on $V$. 
To keep our calculations concise, we will make use of a variant of matrix multiplication notation for tensors.  
Given a tensor \(\mathbf{C} \in V \otimes V = \sum_i u_i \otimes v_i\) and a dual vector \(\mathbf{k}\), we can contract them to obtain dual vector
\begin{align}
    \mathbf{Ck} &= \sum_i u_i \otimes \mathbf{k} (v_i) \\
    \mathbf{kC} &= \sum_i \mathbf{k}(u_i) \otimes v_i.
\end{align}
Note that, if \(\mathbf{C}\) is symmetric, \(\mathbf{kC} = \mathbf{Ck}\).  We can contract twice to obtain a scalar,
\begin{equation}
    \mathbf{kCk} = \sum_i \mathbf{k} (u_i) \otimes \mathbf{k} (v_i).
\end{equation}
Recall, $\mathsf{End}(V) \cong V \otimes V^*$, so we can define the inverse tensor \(\mathbf{C}^{-1} = \sum_i \mathbf{h}_i \otimes \mathbf{k}_i \in V^* \otimes V^*\) by the requirement that 
\(\mathbf{C} \mathbf{C}^{-1} = \id \in \mathsf{End}(V)\) where
\begin{equation} \label{eq:inverseForms}
    \mathbf{C} \mathbf{C}^{-1} = \sum_i \mathbf{h}_i (\mathbf{u}_i) \, \mathbf{v}_i \otimes \mathbf{k}_i.
\end{equation}
In general, we can describe this notation as follows:
Suppose that \(A,B,C\) are vector spaces and that \(\mathbf{M} = \sum_i a_i \otimes b'_i \in A \otimes B^*\) and \(\mathbf{N} = \sum_j b_j \otimes c_j \in B \otimes C\).  We shall define \(\mathbf{MN} = \sum_{ij} b'_i (b_j) \, a_i \otimes c_j  \in A \otimes C\).  Making use of the isomorphisms \(V \cong V^{**}\) and \(V \cong \mathbb{R} \otimes V \cong V\otimes \mathbb{R}\), we can extend this notation. 

\subsection{Algebraic reformulation}

In this paper, we will show how the renormalization procedure described above can be described algebraically. 
For this algebraic reformulation, all we'll need is a vector space $V$, a quadratic form $\mathbf{P}\in\Symp (V)$, a smooth function $I\in C^\infty(V)$, and an action $\mathbf{T} : \mathbb{R}^\times_{>0} \times V\to V$.
To keep this exposition elementary, we will assume $V$ is finite-dimensional throughout.
In \autoref{Alg}, we will introduce a group called the \emph{oscillator group}, $\mathsf{Osc}(V)$, constructed from $\mathrm{GL}(V)$ and $V$ using semidirect products and study how it acts on $C^\infty(V)$ treated as a ring with multiplication given by convolution. In \autoref{sec:renorm}, we will use these algebraic facts to construct both the discrete renormalization transform, $\mathcal{R}_c: C^\infty(V)\to C^\infty(V)$, and its continuous analog. 
Before delving into the details, we begin with a broad outline.

As a computational device for describing the renomalization process, we'll use something called the \emph{generating function} of a theory. The exponential of the action $\exp(-S(\phi))$ is a sort of probability distribution on the space of field configurations $V$. By taking the Laplace transform of $\exp(-S(\phi))$, we can produce a function $Z:V^*\to\mathbb{R}$ that can be used to recover the correlations of any two observables:\\

\begin{dfn} \label{def:generatingFunction}
We define the \emph{moment generating function} $Z[\mathbf{P},I]:V^*\to\mathbb{R}$ of a theory with propagator $\mathbf{P}$ and interaction term $I$ to be:
\begin{equation}
    Z[\mathbf{P},I](\mathbf{J}) = \int_{V}
  \exp (-\tfrac{1}{2} \boldsymbol\phi \mathbf{Q} \boldsymbol\phi +
  I (\boldsymbol\phi) + \mathbf{J}(\boldsymbol\phi))
  \, d\boldsymbol\phi
\end{equation}
Sometimes it is advantageous to take a logarithm to define a related function, the \emph{cumulant generating function}:
\begin{equation}
    W[\mathbf{P},I] = \log(Z[\mathbf{P},I])
\end{equation}
\end{dfn}

\begin{rmk}
    Expanding in powers of $J$, we can recover the moments (c.f. \cite{zee2010quantum}, p.50):
\begin{equation}
    \frac{Z(\mathbf[\mathbf{P},I]({J})}{Z[\mathbf{P},I](0)} = \left\langle \exp \{\mathbf{J}(\boldsymbol\phi)\} \right\rangle =
    \sum_{n=1}^\infty \frac{1}{n!} \mathbf{J}^{\otimes n} \left( \left\langle \boldsymbol\phi^{\otimes n} \right\rangle \right)
\end{equation}

More generally, we can recover the expectation value of any observable $\mathcal{O}$ from the generating function as
\begin{equation}\label{eq:ObsZ}
  \langle \mathcal{O} \rangle = \frac{1}{Z[\mathbf{P},I](0)}
  \left[ \mathcal{O}
    \left( \frac{\partial}{\partial\mathbf{J}} \right)
    Z[\mathbf{P},I](\mathbf{J}) \right]_{\mathbf{J} = 0}.
\end{equation}
\end{rmk}

Note that if we take \(\mathcal{O}\) to be a monomial in \(\phi\), this formula extracts the corresponding term in the Taylor series of \(Z[\mathbf{P},I]\). Also observe that we can rescale $Z[\mathbf{P},I](\mathbf{J})$ by any constant without changing the calculation of the expectation above.

We begin \autoref{sec:renorm} with the observation that we can re-express the integral defining the generating function as a convolution in \autoref{GenAsConvo}:
\begin{equation}
    W[\mathbf{P},I](\mathbf{J}) = \tfrac{1}{2} \mathbf{JPJ} + \tilde{W}[\mathbf{P},I](\mathbf{PJ})
\end{equation}
where we define
\begin{equation}
  \tilde{W}[\mathbf{P}, I](\mathbf{X}) =
  \log ((\mathcal{N}(\mathbf{P}) * \{\exp \circ I\})(\mathbf{X})).
\end{equation}

Using the action of $\mathsf{Osc}(V)$ on $C^\infty(V)$ developed in \autoref{Alg}, we can implement the three step renormalization procedure using facts about convolutions and Gaussian distributions.
Naively, we'd want to describe a ``continuum theory" whose associated length-scale is 0. For technical reasons, this is impossible, so we assume a nonzero fiducial length-scale $L_0$:

\begin{enumerate}
    \item We ``make blocks" by determining which energy scale $cL_0$ we're renormalizing to by writing \( P_{L_0} = (P_{L_0} - P_{cL_0}) + P_{cL_0} \).  
    \item We``coarse grain" by using \autoref{eq:PIadj} to redo the interaction term as $I_{\text{eff}} = \tilde{W}[P_{L_0}-P_{cL_0},I]$, which can be used in the generating function:
    \begin{equation}
        \tilde{W}[P_{L_0},I](\mathbf{J}) = \tilde{W}[P_{c L_0},I_{\text{eff}}](\mathbf{J})
    \end{equation}
    \item We finish by rescaling using \autoref{lem:rescaling} and the transformation $\mathbf{T}_c:V\to V$ to get
    \begin{equation}
        W[P_{c L_0},I_{\text{eff}}](\mathbf{J}) = W[P_{L_0},\mathcal{R}_{c} I](\mathbf{J}\circ \mathbf{T}_c)
    \end{equation}
\end{enumerate}

We can combine all three of these steps to 
describe renormalization as an action of a suitable one-parameter submonoid of \(\mathsf{GL} (V) \ltimes (\Symp (V), +)\) on \(C^{\infty} (V)\).  
 
The facts we needed are well-known and can be proven using elementary algebra and calculus.  However, we will rederive them using abstract algebra because that approach gives a deeper understanding and makes possible generalizations to situations where the elementary methods do not apply.  
 
\section{Algebraic Facts} \label{Alg}

\subsection{The Oscillator Group} \label{sec:oscgroup}

In this section, we construct the \emph{oscillator group} \(\mathsf{Osc} (V)\).
This group, along with some of its subgroups, will play a central role throughout this section for characterizing properties of Gaussians, which will be subequently used in \autoref{sec:renorm} to describe the process of renormalization.
The construction will involve taking multiple semidirect products of familiar groups from linear algebra. This can be done in more than one way, so we shall perform the operations in an order which is suited to the applications which we shall present later.

\subsubsection{Constructing $\mathsf{Osc}(V)$}

Let \(V\) be a real vector space (finite dimensional for now).   Associated to this vector space, there are several Lie groups:
\begin{enumerate}
    \item The general linear group \(\mathsf{GL}(V) \subset \End(V)\) which consists of all invertible linear transforms of \(V\).
    \item The additive group \((V,+)\) of the vector space.
    \item The additive group \((V^*, +)\) of the dual vector space.    
\end{enumerate}

Starting with these groups, we will now construct more groups as semidirect products.
Let $\mathbb{R}^+$ denote the additive group of reals. Given a real vector space \(V\), we define the (Lie) groups \(\mathsf{ILin}(V), \mathsf{IGL}(V), \mathsf{Heis}(V), \mathsf{Osc}(V)\) as follows:\\

\begin{dfn}
    Define the \emph{Inhomogeneous Linear Group} \(\mathsf{ILin}(V) = V \times \mathbb{R}^+\).  Note that the product of groups is the instance of the semidirect product using the trivial group action.\\
\end{dfn}

\begin{dfn}
    Define the \emph{Inhomogeneous General Linear Group} \(\mathsf{IGL}(V) = \mathsf{GL}(V) \ltimes V^*\) where the semidirect product is formed using the right action
    \begin{align}
        V^* \times \mathsf{GL}(V) &\to V^* \\
        (\mathbf{k},\mathbf{M}) &\mapsto \mathbf{kM}.
    \end{align}
    where $\mathbf{kM}$ is the dual vector defined by composition:
    \begin{equation} \label{dualmatrixcomposite}
        \begin{tikzcd}
            V \ar[r, "\mathbf{M}"] \ar[rr, bend right, "\mathbf{kM}"'] & V \ar[r, "\mathbf{k}"] & \mathbb{R}
        \end{tikzcd}
    \end{equation}  
    Explicitly, the group operation in $\mathsf{IGL}(V)$ can be written:
    \begin{equation}
        (\mathbf{L}, \mathbf{j}) \bullet  (\mathbf{M}, \mathbf{k})
        = (\mathbf{LM}, \mathbf{jM} + \mathbf{k}).
    \end{equation}
\end{dfn}

\begin{rmk}
    Note that if we think of a matrix $\mathbf{M}\in\mathsf{GL}(V)\subset V\otimes V^*$, then the dual vector $\mathbf{kM}$ above is consistent with the contraction of tensors in \autoref{sec:notation}.\\
\end{rmk}

\begin{dfn}
    Define the \emph{Heisenberg Group} \(\mathsf{Heis}(V) = V^* \ltimes \mathsf{ILin}(V)\) where the semidirect product is formed using the left action
    \begin{align}
        V^*\times (V\times\mathbb{R}^+) &\to V\times\mathbb{R}^+\\
        (\mathbf{k},\mathbf{v},a) &\mapsto (\mathbf{v},a+\mathbf{k}(\mathbf{v}))
    \end{align}
    Explicitly, the group operation is then:
    \begin{equation}
        (\mathbf{j},\mathbf{u},a)\bullet(\mathbf{k},\mathbf{v},b) = (\mathbf{j}+\mathbf{k},\mathbf{u}+\mathbf{v},a+b+\mathbf{j}(\mathbf{v}))
    \end{equation}
\end{dfn}

\begin{dfn} 
    Define the \emph{Oscillator Group} \(\mathsf{Osc}(V) = \mathsf{IGL}(V) \ltimes \mathsf{ILin}(V)\) where the semidirect product is formed using the left action
    \begin{align}
        (\mathsf{GL}(V) \ltimes V^*) \times (V \times \mathbb{R}) &\to V \times \mathbb{R} \\
        (\mathbf{M}, \mathbf{k}), (\mathbf{v}, c) &\mapsto (\mathbf{Mv}, c + \mathbf{k} (\mathbf{v}))
    \end{align}
    The group operation in $\mathsf{Osc}(V)$ is then
    \begin{equation}\label{eq:OscGpOper}
        (\mathbf{L}, \mathbf{j}, \mathbf{u}, a) 
        \bullet (\mathbf{M}, \mathbf{k}, \mathbf{v}, b)
        = (\mathbf{LM}, \mathbf{jM} + \mathbf{k}, \mathbf{u} + \mathbf{Lv}, a + b + \mathbf{j} (\mathbf{v})).
    \end{equation}
\end{dfn}

\begin{rmk} \label{subgroups}
    Note that $\mathsf{Heis}(V)$ can be witnessed as a subgroup of $\mathsf{Osc}(V)$ with the identity matrix in the first slot:
    \begin{align}
        \mathsf{Heis}(V) &\hookrightarrow \mathsf{Osc}(V)\\
        (\mathbf{k},\mathbf{v},a) &\mapsto (\id,\mathbf{k},\mathbf{v},a)
    \end{align}
    We can also construct the oscillator group as a semidirect product \(\mathsf{Osc}(V) = \mathsf{GL}(V) \ltimes \mathsf{Heis}(V)\).  In what follows, we will make use of both of these ways of factoring \(\mathsf{Osc}(V)\) as a semidirect product.
    Using this observation, we can write the following lattice of subgroups:
    \begin{equation}
        \xymatrix{
            && \mathsf{Osc}(V) \ar@{-}[ld] \ar@{-}[rd] &&&& \\
            & \mathsf{IGL}(V) \ar@{-}[ld] \ar@{-}[rd] && \mathsf{Heis}(V) \ar@{-}[ld] \ar@{-}[rd] &&& \\
            \mathsf{GL}(V) && V^* && \mathsf{ILin}(V) \ar@{-}[ld] \ar@{-}[rd] & \\
            & & & V & & \mathbb{R}^+
            }
    \end{equation}
\end{rmk}

\begin{rmk}
Intuitively, we can think of an element \((\mathbf{v}, c) \in \mathsf{ILin}(V)\) as an inhomogeneous linear function
\begin{align}
    V &\to \mathbb{R} \\
    \mathbf{x} &\mapsto c\mathbf{x} + \mathbf{v}
\end{align}
and the group operation is pointwise addition of functions. 

We can think of an element \((\mathbf{M}, \mathbf{k}) \in \mathsf{IGL}(V)\) as a transform
\begin{align}
    V^* &\to V^* \\
    \mathbf{y} &\mapsto \mathbf{y}\mathbf{M} + \mathbf{k}
\end{align}
and the group operation is composition of transforms.\\
\end{rmk}

\begin{rmk}
 Note that the group \(\mathsf{Osc}(V)\) has the same underlying space as the group \(\mathsf{GL}(V \oplus \mathbb{R})\).  However, their multiplication operations are not isomorphic.  To see the difference,
 we may note that, once we choose a basis for \(V\), \(\mathsf{Osc}(V)\) has a faithful matrix representation
 \begin{equation}
     (\mathbf{M}, \mathbf{k}, \mathbf{v}, b) \mapsto
     \begin{pmatrix}
         1 & \mathbf{k} & b \\
         \boldsymbol{0} & \mathbf{M} & \mathbf{v} \\
         0 & \boldsymbol{0} & 1
     \end{pmatrix}
 \end{equation}
The reader can check that multiplying matrices of the form above recovers the multiplication of $\mathrm{Osc}(V)$ in \autoref{eq:OscGpOper}.
At a deeper level, this observation a manifestation of  the fact that \(\mathsf{Osc}(V)\) is a contraction of \(\mathsf{Gl}(V \oplus \mathbb{R})\). \cite{gilmore2006lie}
\end{rmk}

\subsection{Annihilation subgroups}
 
The name ``oscillator algebra" comes from the observation that this Lie subalgebra appears in the Lie algebraic reformulation of Dirac's algebraic solution of the quantum mechanical harmonic oscillator \cite{dirac1981principles, gilmore2006lie}. This observation is relevant for our purposes because free quantum field theories are equivalent to harmonic oscillators \cite{zee2010quantum}, and hence one may expect Lie groups that act upon them to play a role in operations involving the free field term in the action.

In more detail, Dirac specifies the ground state of a harmonic oscillator as the unique vector of unit length which lies in the kernel of every annihilation operator.  (Recall that in the Schr\"odinger representation, this ground state is a Gaussian.)  The annihilator and creation operators generate the subgroup \(\mathsf{Heis}(V)\) of \(\mathsf{Osc}(V)\).  We will now characterize the way that the annihilation operators lie inside of \(\mathsf{Heis}(V)\) as sections of the canonical projection map $\mathsf{proj}:\mathsf{Heis}(V)=V^*\ltimes\mathsf{ILin}(V)\to V^*$. Note that $\mathsf{proj}$ is a group homomorphism. In a later section, after introducing an action of \(\mathsf{Osc}(V)\) on \(C^{\infty}(V)\), we will use this characterization to specify Gaussians as special elements of \(C^{\infty}(V)\).\\

\begin{dfn}
    For every symmetric tensor \(\mathbf{C} \in \Sym (V)\), define
    \begin{align}
        \mathit{An}(\mathbf{C}) \colon V^* &\to \mathsf{Osc} (V) \\
        \mathbf{k} &\mapsto \left(\mathbf{id},  \mathbf{k}, \mathbf{Ck}, \tfrac{1}{2} \mathbf{kCk} \right)
    \end{align}
\end{dfn}

\begin{lem}
$An(\mathbf{C})$ is a section (in the category of groups) of the projection $\mathsf{proj}$ above:
\begin{equation}
\begin{tikzcd}
    \mathsf{Heis}(V) \ar[r, "\mathsf{proj}"'] & V^* \ar[l, bend right, "An(\mathbf{C})"']
\end{tikzcd}
\end{equation}
\end{lem}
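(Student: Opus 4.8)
The plan is to verify directly that $\mathsf{proj} \circ \An(\mathbf{C}) = \id_{V^*}$ and that $\An(\mathbf{C})$ is a group homomorphism; together these say precisely that $\An(\mathbf{C})$ is a section of $\mathsf{proj}$ in the category of groups. The first condition is essentially immediate: under the embedding $\mathsf{Heis}(V) \hookrightarrow \mathsf{Osc}(V)$ of \autoref{subgroups}, the element $\An(\mathbf{C})(\mathbf{k}) = (\id, \mathbf{k}, \mathbf{Ck}, \tfrac12 \mathbf{kCk})$ corresponds to $(\mathbf{k}, \mathbf{Ck}, \tfrac12\mathbf{kCk}) \in \mathsf{Heis}(V) = V^* \ltimes \mathsf{ILin}(V)$, and $\mathsf{proj}$ reads off the first coordinate, returning $\mathbf{k}$. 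So $\mathsf{proj} \circ \An(\mathbf{C}) = \id_{V^*}$ with no computation.

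The substantive part is checking that $\An(\mathbf{C})$ respects the group operations, i.e. that for all $\mathbf{j}, \mathbf{k} \in V^*$,
\begin{equation}
    \An(\mathbf{C})(\mathbf{j}) \bullet \An(\mathbf{C})(\mathbf{k}) = \An(\mathbf{C})(\mathbf{j} + \mathbf{k}).
\end{equation}
Using the multiplication in $\mathsf{Osc}(V)$ from \autoref{eq:OscGpOper} with both matrix slots equal to $\id$, the left side is
\begin{equation}
    (\id, \mathbf{j}, \mathbf{Cj}, \tfrac12\mathbf{jCj}) \bullet (\id, \mathbf{k}, \mathbf{Ck}, \tfrac12\mathbf{kCk}) = \bigl(\id,\ \mathbf{j} + \mathbf{k},\ \mathbf{Cj} + \mathbf{Ck},\ \tfrac12\mathbf{jCj} + \tfrac12\mathbf{kCk} + \mathbf{j}(\mathbf{Ck})\bigr),
\end{equation}
while the right side is $(\id,\ \mathbf{j}+\mathbf{k},\ \mathbf{C}(\mathbf{j}+\mathbf{k}),\ \tfrac12 (\mathbf{j}+\mathbf{k})\mathbf{C}(\mathbf{j}+\mathbf{k}))$. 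The first two coordinates match trivially; the third matches by linearity of contraction, $\mathbf{C}(\mathbf{j}+\mathbf{k}) = \mathbf{Cj} + \mathbf{Ck}$. For the last coordinate I would expand $\tfrac12(\mathbf{j}+\mathbf{k})\mathbf{C}(\mathbf{j}+\mathbf{k}) = \tfrac12\mathbf{jCj} + \tfrac12\mathbf{kCk} + \tfrac12\mathbf{jCk} + \tfrac12\mathbf{kCj}$ and then invoke symmetry of $\mathbf{C}$, which gives $\mathbf{jCk} = \mathbf{kCj} = \mathbf{j}(\mathbf{Ck})$, so the two cross terms combine to exactly the cocycle term $\mathbf{j}(\mathbf{k}) $, er, $\mathbf{j}(\mathbf{Ck})$, appearing on the left. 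I would also record that $\An(\mathbf{C})$ sends $0 \in V^*$ to the identity $(\id, 0, 0, 0)$, which is automatic from the formula.

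The only place any care is needed — and the nearest thing to an obstacle — is bookkeeping with the contraction notation of \autoref{sec:notation}: one must be sure that the semidirect-product cocycle $\mathbf{j}(\mathbf{v})$ in \autoref{eq:OscGpOper}, evaluated at $\mathbf{v} = \mathbf{Ck}$, really is the symmetric bilinear pairing $\mathbf{jCk}$, and that symmetry of $\mathbf{C}$ is what makes the cross terms add rather than needing to be halved away. This is exactly the identity $\mathbf{kC} = \mathbf{Ck}$ noted right after \autoref{eq:inverseForms}, so once that is cited the computation closes. I expect the whole proof to be under half a page.
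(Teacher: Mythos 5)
Your proposal is correct and follows essentially the same route as the paper's proof: expand $\tfrac12(\mathbf{j}+\mathbf{k})\mathbf{C}(\mathbf{j}+\mathbf{k})$, use bilinearity and symmetry of $\mathbf{C}$ to identify the cross terms with the cocycle term $\mathbf{j}(\mathbf{Ck})$ in the group law, and note that $0$ maps to the identity. The only addition is your explicit (and correct, if trivial) verification that $\mathsf{proj}\circ \An(\mathbf{C})=\id_{V^*}$, which the paper leaves implicit.
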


\begin{proof}
Using the bilinearity and symmetry of $\mathbf{C}$, we see:
\begin{align}
An(\mathbf{C})(\mathbf{k}_1+\mathbf{k}_2) &= (\id, \mathbf{k}_1+\mathbf{k}_2,\mathbf{C}(\mathbf{k}_1+\mathbf{k}_2),\frac{1}{2}(\mathbf{k}_1+\mathbf{k}_2)\mathbf{C}(\mathbf{k}_1+\mathbf{k}_2))\\
&=(\id ,\mathbf{k}_1+\mathbf{k}_2,\mathbf{Ck}_1+\mathbf{Ck}_2,\frac{1}{2}\mathbf{k_1}\mathbf{Ck}_1+\frac{1}{2}\mathbf{k}_2\mathbf{Ck}_2+\mathbf{k}_1\mathbf{Ck}_2)\\
&=(\id,\mathbf{k}_1,\mathbf{Ck}_1,\frac{1}{2}\mathbf{k}_1\mathbf{Ck}_1)\bullet(\id,\mathbf{k}_2,\mathbf{Ck}_2,\frac{1}{2}\mathbf{k}_2\mathbf{Ck}_2)\\
&= An(\mathbf{C})(\mathbf{k}_1)\bullet An(\mathbf{C})(\mathbf{k}_2)
\end{align}
Also note that $An(\mathbf{C})(0) = (\id,0,0,0)$, hence $An(\mathbf{C})$ preserves the identity and is a group homomorphism.
\end{proof}

\begin{cor}
    For all \(\mathbf{C} \in \Sym\), the image of \(\An(\mathbf{C})\) is a commutative subgroup of \(\mathsf{Heis}(V)\).
\end{cor}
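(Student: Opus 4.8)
The plan is to read this off from the Lemma just proved, so almost no new work is required. Since $\An(\mathbf{C})\colon (V^*,+)\to\mathsf{Heis}(V)$ is a group homomorphism, its image is automatically a subgroup of $\mathsf{Heis}(V)$; the only thing left to check is commutativity.

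For commutativity I would give the quick abstract argument: the source $(V^*,+)$ is abelian, and the homomorphic image of an abelian group is abelian. Concretely, for any $\mathbf{k}_1,\mathbf{k}_2\in V^*$ we have $\An(\mathbf{C})(\mathbf{k}_1)\bullet\An(\mathbf{C})(\mathbf{k}_2)=\An(\mathbf{C})(\mathbf{k}_1+\mathbf{k}_2)=\An(\mathbf{C})(\mathbf{k}_2+\mathbf{k}_1)=\An(\mathbf{C})(\mathbf{k}_2)\bullet\An(\mathbf{C})(\mathbf{k}_1)$, where the outer equalities use that $\An(\mathbf{C})$ is a homomorphism and the middle one uses commutativity of addition in $V^*$.

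If one instead wants to see explicitly where the symmetry hypothesis on $\mathbf{C}$ enters, I would verify the commutation directly inside $\mathsf{Heis}(V)$ using its group law $(\mathbf{j},\mathbf{u},a)\bullet(\mathbf{k},\mathbf{v},b)=(\mathbf{j}+\mathbf{k},\mathbf{u}+\mathbf{v},a+b+\mathbf{j}(\mathbf{v}))$. Computing both orders of $\An(\mathbf{C})(\mathbf{j})$ and $\An(\mathbf{C})(\mathbf{k})$, the first three coordinates match by commutativity of vector addition, while the scalar coordinates are $\tfrac{1}{2}\mathbf{jCj}+\tfrac{1}{2}\mathbf{kCk}+\mathbf{j}(\mathbf{Ck})$ versus $\tfrac{1}{2}\mathbf{kCk}+\tfrac{1}{2}\mathbf{jCj}+\mathbf{k}(\mathbf{Cj})$; these agree precisely because $\mathbf{jCk}=\mathbf{kCj}$, which is the symmetry of $\mathbf{C}$. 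This is also the computation underlying the homomorphism identity in the Lemma's proof, so it is really the same fact viewed twice.

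There is no genuine obstacle here: the statement is an immediate corollary of the Lemma together with the abelian-ness of $(V^*,+)$, and the only decision is whether to present the abstract one-line argument or the explicit coordinate check that highlights the role of symmetry. I would include the short abstract argument and remark that symmetry of $\mathbf{C}$ is what makes the two scalar terms coincide.
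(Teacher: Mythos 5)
Your proposal is correct and matches the paper's proof, which likewise observes that the image of a homomorphism from the abelian group \((V^*,+)\) is an abelian subgroup of \(\mathsf{Heis}(V)\). The additional explicit coordinate check highlighting the role of the symmetry of \(\mathbf{C}\) is a nice supplement but not part of the paper's argument.
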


\begin{proof}
    Since $V^*$ is a commutative group and \(\An (\mathbf{C})\) is a group homomorphism, its image is a commutative subgroup of \(\mathsf{Heis}(V)\).
\end{proof}

We can further note that:\\

\begin{lem}
    A map
    \begin{align}
     V^{*} &\to \mathsf{Heis}(V) \subset \mathsf{Osc}(V) \\
     \mathbf{k} &\mapsto (\id, \mathbf{k}, a (\mathbf{k}), b(\mathbf{k}))
 \end{align}  
 is a section of \(\mathrm{proj}\) iff
 \begin{align}
     a(0) &= 0 & a (\mathbf{k}_1 + \mathbf{k}_2) &= a (\mathbf{k}_1) + a (\mathbf{k}_2) \label{eq:aconds} \\
      b(0) &= 0 &  b (\mathbf{k}_1 + \mathbf{k}_2) &= b (\mathbf{k}_1) + b (\mathbf{k}_2) + \mathbf{k}_1 (a (\mathbf{k}_2)) \label{eq:bconds}
 \end{align}
 for all \(\mathbf{k}_1. \mathbf{k}_2 \in V^*\).
\end{lem}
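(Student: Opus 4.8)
The plan is to recognize that this equivalence is simply an unwinding of the definition of a section in the category of groups, followed by a slot-by-slot comparison in the semidirect product $\mathsf{Heis}(V)$. First I would observe that any map of the displayed shape $\mathbf{k} \mapsto (\id, \mathbf{k}, a(\mathbf{k}), b(\mathbf{k}))$ is automatically a set-theoretic right inverse of $\mathrm{proj}$: the projection $\mathrm{proj}\colon \mathsf{Heis}(V) = V^* \ltimes \mathsf{ILin}(V) \to V^*$ just reads off the $V^*$-coordinate, which here is $\mathbf{k}$ by construction. Hence the map is a section exactly when it is in addition a group homomorphism, so the entire content of the lemma is to rephrase ``is a homomorphism'' as the system \eqref{eq:aconds}--\eqref{eq:bconds}.

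Next I would expand the homomorphism condition using the group law. By \autoref{subgroups}, $\mathsf{Heis}(V)$ embeds in $\mathsf{Osc}(V)$ as the elements with $\id$ in the first slot, so \autoref{eq:OscGpOper} specializes to $(\id, \mathbf{j}, \mathbf{u}, a) \bullet (\id, \mathbf{k}, \mathbf{v}, b) = (\id, \mathbf{j} + \mathbf{k}, \mathbf{u} + \mathbf{v}, a + b + \mathbf{j}(\mathbf{v}))$. Applying this to the images of $\mathbf{k}_1$ and $\mathbf{k}_2$ and comparing with the image of $\mathbf{k}_1 + \mathbf{k}_2$ slot by slot: the first two slots agree identically, the third gives $a(\mathbf{k}_1 + \mathbf{k}_2) = a(\mathbf{k}_1) + a(\mathbf{k}_2)$, and the fourth gives $b(\mathbf{k}_1 + \mathbf{k}_2) = b(\mathbf{k}_1) + b(\mathbf{k}_2) + \mathbf{k}_1(a(\mathbf{k}_2))$. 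For the forward direction I would also use that a homomorphism sends $0 \in V^*$ to the identity $(\id, 0, 0, 0)$, which yields $a(0) = 0$ and $b(0) = 0$; conversely these two equalities already follow from the additivity relations by putting $\mathbf{k}_2 = 0$, but I would still record them to match the lemma as stated. (One may note in passing that \eqref{eq:bconds} is of cocycle type, the bilinear term $\mathbf{k}_1(a(\mathbf{k}_2))$ measuring the failure of $b$ to be additive.)

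I do not expect a genuine obstacle here; the only thing demanding care is the bookkeeping of the non-commutative term $\mathbf{j}(\mathbf{v})$ in the multiplication on $\mathsf{Heis}(V)$, so that the asymmetric correction in \eqref{eq:bconds} comes out as $\mathbf{k}_1(a(\mathbf{k}_2))$ with the arguments in the correct order, rather than $\mathbf{k}_2(a(\mathbf{k}_1))$. As a consistency check I would verify that the section $\An(\mathbf{C})$ of the previous lemma, with $a(\mathbf{k}) = \mathbf{Ck}$ and $b(\mathbf{k}) = \tfrac12 \mathbf{kCk}$, satisfies \eqref{eq:aconds}--\eqref{eq:bconds}: additivity of $a$ is bilinearity of $\mathbf{C}$, and \eqref{eq:bconds} reduces to the polarization identity $\tfrac12(\mathbf{k}_1 + \mathbf{k}_2)\mathbf{C}(\mathbf{k}_1 + \mathbf{k}_2) = \tfrac12\mathbf{k}_1\mathbf{Ck}_1 + \tfrac12\mathbf{k}_2\mathbf{Ck}_2 + \mathbf{k}_1\mathbf{Ck}_2$ together with the symmetry of $\mathbf{C}$.
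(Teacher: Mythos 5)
Your proposal is correct and follows essentially the same route as the paper: both reduce ``section of $\mathrm{proj}$'' to ``group homomorphism'' (the right-inverse property being automatic from the shape of the map), expand the product via the group law of $\mathsf{Heis}(V)\subset\mathsf{Osc}(V)$, and read off \eqref{eq:aconds}--\eqref{eq:bconds} by comparing slots. The extra observations you include (that $a(0)=0$ and $b(0)=0$ are redundant given additivity, and the consistency check against $\An(\mathbf{C})$) are sound but not needed.
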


\begin{proof}
    In order for the map to be a group homomorphism, we must have
    \begin{align}
        (\id, 0, a (0), b(0)) &= (\id, 0, 0, 0) \\
        (\id, \mathbf{k}_1 + \mathbf{k}_2, a (\mathbf{k}_1 + \mathbf{k}_2), b(\mathbf{k}_1 + \mathbf{k}_2)) &= (\id, \mathbf{k}_1, a (\mathbf{k}_1), b(\mathbf{k}_1)) \bullet (\id, \mathbf{k_2}, a (\mathbf{k_2}), b(\mathbf{k}_2))
    \end{align}
    The former condition will be satisfied iff \(a(0) = 0\) and \(b(0) = 0\).  Using \autoref{eq:OscGpOper}, we see that
    \begin{multline}
        (\id, \mathbf{k}_1, a(\mathbf{k}_1), b (\mathbf{k}_1)) \bullet (\id, \mathbf{k_2}, a(\mathbf{k_2}), b(\mathbf{k}_2)) \\ = (\id, \mathbf{k}_1 + \mathbf{k}_2, a(\mathbf{k}_1) + a(\mathbf{k}_2), b (\mathbf{k}_1) + b (\mathbf{k}_2) + \mathbf{k}_1 (a(\mathbf{k}_2))).
    \end{multline}
    Hence, the latter condition will be satisfied iff
    \begin{align}
        a (\mathbf{k}_1 + \mathbf{k}_2) &= a (\mathbf{k}_1) + a (\mathbf{k}_2) \\
        b (\mathbf{k}_1 + \mathbf{k}_2) &= b (\mathbf{k}_1) + b (\mathbf{k}_2) + \mathbf{k}_1 (a(\mathbf{k}_2)) 
    \end{align}
\end{proof}

\begin{dfn} \label{sumofsectionsdfn}
    If \(s_1, s_2 \colon V^{*} \to V^* \ltimes \mathsf{Lin} (V)\) are sections of \(\mathrm{proj}\) given as
    \begin{align}
        s_1 (\mathbf{k}) &= (\id, \mathbf{k}, a_1 (\mathbf{k}), b_1 (\mathbf{k})) \\
        s_2 (\mathbf{k}) &= (\id, \mathbf{k}, a_2 (\mathbf{k}), b_2 (\mathbf{k})),
    \end{align}
    define the map \([s_1 + s_2] \colon V^{*} \to V^* \ltimes \mathsf{Lin} (V)\) by the equation
    \begin{equation}
        [s_1 + s_2] (\mathbf{k}) = (\id, \mathbf{k}, a_1 (\mathbf{k}) + a_2 (\mathbf{k}), b_1 (\mathbf{k}) + b_2 (\mathbf{k})).
    \end{equation}
\end{dfn}

\begin{cor}
    If \(s_1, s_2 \colon V^{*} \to V^* \ltimes \mathsf{Lin} (V)\) are section of \(\mathrm{proj}\), then \([s_1 + s_2]\) is also a section of \(\mathrm{proj}\). Furthermore,
    \(\Gamma(\mathrm{proj})\) becomes a group when equipped with the operation \([+]\) and the identity element \(\id_{\Gamma(\mathrm{proj})} (\mathbf{k}) = (\id, \mathbf{k}, 0, 0)\).\\
\end{cor}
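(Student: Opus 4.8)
The plan is to verify the group axioms for $(\Gamma(\mathrm{proj}), [+])$ directly, leveraging the previous lemma's characterization of sections as pairs $(a,b)$ satisfying the conditions \autoref{eq:aconds} and \autoref{eq:bconds}. First I would show $[s_1+s_2]$ is again a section: writing $s_i(\mathbf{k}) = (\id,\mathbf{k},a_i(\mathbf{k}),b_i(\mathbf{k}))$, I need the pair $(a_1+a_2, b_1+b_2)$ to satisfy the same two conditions. The condition on $a_1+a_2$ is immediate from additivity of $a_1$ and $a_2$. The condition on $b_1+b_2$ requires checking
\begin{equation}
(b_1+b_2)(\mathbf{k}_1+\mathbf{k}_2) = (b_1+b_2)(\mathbf{k}_1) + (b_1+b_2)(\mathbf{k}_2) + \mathbf{k}_1\big((a_1+a_2)(\mathbf{k}_2)\big),
\end{equation}
which follows by adding the two instances of \autoref{eq:bconds} for $b_1$ and $b_2$ and using linearity of $\mathbf{k}_1$ to combine $\mathbf{k}_1(a_1(\mathbf{k}_2)) + \mathbf{k}_1(a_2(\mathbf{k}_2)) = \mathbf{k}_1((a_1+a_2)(\mathbf{k}_2))$. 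The vanishing at $0$ is clear. By the lemma, $[s_1+s_2] \in \Gamma(\mathrm{proj})$.

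Next I would check the group axioms. Associativity and commutativity of $[+]$ are inherited pointwise from addition in $V$ and $\mathbb{R}$, since $[+]$ acts separately on the $a$-components and $b$-components by ordinary addition; no cross terms appear in \autoref{sumofsectionsdfn}. For the identity, I would verify that $\id_{\Gamma(\mathrm{proj})}(\mathbf{k}) = (\id,\mathbf{k},0,0)$ is indeed a section — its pair is $(a,b) = (0,0)$, which trivially satisfies \autoref{eq:aconds} and \autoref{eq:bconds} — and that $[s + \id_{\Gamma(\mathrm{proj})}] = s$, which is immediate. For inverses, given $s(\mathbf{k}) = (\id,\mathbf{k},a(\mathbf{k}),b(\mathbf{k}))$ I would propose $s^{-1}(\mathbf{k}) = (\id,\mathbf{k},-a(\mathbf{k}),-b(\mathbf{k}))$ and check that the pair $(-a,-b)$ satisfies the section conditions: $-a$ is additive, and negating \autoref{eq:bconds} gives $-b(\mathbf{k}_1+\mathbf{k}_2) = -b(\mathbf{k}_1) - b(\mathbf{k}_2) - \mathbf{k}_1(a(\mathbf{k}_2)) = -b(\mathbf{k}_1) - b(\mathbf{k}_2) + \mathbf{k}_1(-a(\mathbf{k}_2))$, as required. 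Then $[s + s^{-1}] = \id_{\Gamma(\mathrm{proj})}$ follows componentwise.

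The only subtlety worth flagging is that the group operation $[+]$ on sections is \emph{not} the pointwise product in $\mathsf{Osc}(V)$: multiplying $s_1(\mathbf{k})$ and $s_2(\mathbf{k})$ in the oscillator group would produce the last coordinate $b_1(\mathbf{k}) + b_2(\mathbf{k}) + \mathbf{k}(a_2(\mathbf{k}))$ rather than $b_1(\mathbf{k}) + b_2(\mathbf{k})$, and in general would not even land in $\Gamma(\mathrm{proj})$. So the content of the corollary is precisely that the \emph{modified} addition in \autoref{sumofsectionsdfn} — which strips off that extra cocycle term — is the right operation making $\Gamma(\mathrm{proj})$ a group. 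I expect the main (modest) obstacle to be bookkeeping the $b$-component condition carefully when checking closure and inverses; everything else is a routine transfer of the abelian group structure of $V \oplus \mathbb{R}$ through the componentwise formulas, and no part of the argument should require anything beyond the preceding lemma.
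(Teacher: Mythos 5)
Your proof is correct and follows essentially the same route as the paper: a direct check that the pair $(a_1+a_2,\,b_1+b_2)$ satisfies the section conditions of the preceding lemma, with the remaining group axioms transferred componentwise. You are in fact slightly more complete than the paper, whose proof only verifies closure and asserts associativity and the identity --- your explicit inverse $(-a,-b)$ and your observation that $[+]$ differs from the pointwise $\mathsf{Osc}(V)$ product by the cocycle term $\mathbf{k}(a_2(\mathbf{k}))$ are both correct and fill in details the paper leaves implicit.
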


\begin{proof}
    We check:
    \begin{alignat}{3}
        &\hskip -6pt  b_1(k_1+k_2)+b_2(k_1+k_2)\\
        &= b_1(k_1)+b_1(k_2)+k_1(a_1(k_2)) &&\text{By \autoref{eq:bconds} for }b_1\\
        &+b_2(k_1)+b_2(k_2)+k_1(a_2(k_2)) &&\text{By \autoref{eq:bconds} for }b_2\\
        &=b_1(k_1)+b_2(k_1)+b_2(k_1)+b_2(k_2) \qquad &&\text{Rearranging}\\
        &+ k_1(a_1(k_2)+a_2(k_2)) &&\text{By linearity of }k_1
    \end{alignat}
    Similarly, an easy computation using \autoref{eq:aconds} can verify:
    \begin{equation}
        (a_1+a_2)(k_1+k_2)=a_1(k_1+k_2)+a_2(k_1+k_2)
    \end{equation}
    The operation $[+]$ is clearly associative and preserves the identity.
\end{proof}

\begin{lem}\label{lem:HomImg}
The map \(\An \colon \Sym (V)\to \Gamma(\mathsf{proj})\)
is a group isomorphism onto its image.
\end{lem}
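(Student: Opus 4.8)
The plan is to establish the two facts that together constitute ``isomorphism onto its image'': that $\An$ carries the additive structure of $\Sym(V)$ to the operation $[+]$ on $\Gamma(\mathsf{proj})$ from \autoref{sumofsectionsdfn}, and that $\An$ is injective. Since the image of a group homomorphism is automatically a subgroup of the target, these two points suffice. First I would note that the earlier lemma already shows each $\An(\mathbf{C})$ is a genuine section of $\mathsf{proj}$, so $\An$ does land in $\Gamma(\mathsf{proj})$ and it only remains to analyze algebraic structure.

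To see the homomorphism property, write $\An(\mathbf{C})(\mathbf{k}) = (\id, \mathbf{k}, a_{\mathbf{C}}(\mathbf{k}), b_{\mathbf{C}}(\mathbf{k}))$ with $a_{\mathbf{C}}(\mathbf{k}) = \mathbf{Ck}$ and $b_{\mathbf{C}}(\mathbf{k}) = \tfrac{1}{2}\mathbf{kCk}$. The contraction $\mathbf{C} \mapsto \mathbf{Ck}$ and the quadratic form $\mathbf{C}\mapsto \mathbf{kCk}$ are both linear in the tensor slot $\mathbf{C}$, so $a_{\mathbf{C}_1+\mathbf{C}_2} = a_{\mathbf{C}_1} + a_{\mathbf{C}_2}$ and $b_{\mathbf{C}_1+\mathbf{C}_2} = b_{\mathbf{C}_1} + b_{\mathbf{C}_2}$ pointwise in $\mathbf{k}$; comparing with \autoref{sumofsectionsdfn} this is precisely the statement $\An(\mathbf{C}_1 + \mathbf{C}_2) = [\An(\mathbf{C}_1) + \An(\mathbf{C}_2)]$. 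Moreover $\An(\mathbf{0})$ is the section $\mathbf{k}\mapsto(\id,\mathbf{k},0,0)$, which is the identity of $\Gamma(\mathsf{proj})$, so $\An$ is a group homomorphism.

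For injectivity, it is enough to check the kernel is trivial: if $\An(\mathbf{C})$ is the identity section then $\mathbf{Ck} = 0$ for all $\mathbf{k}\in V^*$, and viewing $\mathbf{C}\in V\otimes V$ as a linear map $V^*\to V$ this forces $\mathbf{C} = 0$ (and then $b_{\mathbf{C}}$ vanishes automatically, so no further condition is needed). Hence $\An$ is an injective homomorphism, i.e.\ an isomorphism onto its image. I do not expect any genuine obstacle here; the only point that merits a second's care is that additivity of the last coordinate $\tfrac{1}{2}\mathbf{kCk}$ is additivity \emph{in $\mathbf{C}$}, which is unobstructed, as opposed to the cross term $\mathbf{k}_1(a(\mathbf{k}_2))$ of \autoref{eq:bconds}, which concerns additivity in $\mathbf{k}$ and is already absorbed into the section condition.
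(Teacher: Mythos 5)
Your proposal is correct and follows essentially the same route as the paper: additivity of $\An$ comes from linearity of $\mathbf{C}\mapsto\mathbf{Ck}$ and $\mathbf{C}\mapsto\mathbf{kCk}$ in the tensor slot, and injectivity from the fact that $\mathbf{Ck}=0$ for all $\mathbf{k}$ forces $\mathbf{C}=0$ (the paper phrases this as $\mathbf{C}_1\neq\mathbf{C}_2$ implying $\mathbf{C}_1\mathbf{k}\neq\mathbf{C}_2\mathbf{k}$ for some $\mathbf{k}$, which is the same observation). Your closing remark correctly identifies the one point worth care --- that the relevant additivity is in $\mathbf{C}$, not in $\mathbf{k}$.
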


\begin{proof}
We'll check that $\An$ is a group homomorphism by checking
\begin{equation} \label{sumofsectionsequality}
An(\mathbf{C}_1+\mathbf{C}_2) (\mathbf{k})= [An(\mathbf{C}_1) + An(\mathbf{C}_2)](\mathbf{k}) 
\end{equation}
for all \(\mathbf{k} \in V^*\). We compute:
\begin{align}
An(\mathbf{C}_1+\mathbf{C}_2)(\mathbf{k}) &= (\id,\mathbf{k},(\mathbf{C}_1+\mathbf{C}_2)\mathbf{k},\frac{1}{2}\mathbf{k}(\mathbf{C}_1+\mathbf{C}_2)\mathbf{k})\\
&=(\id,\mathbf{k},\mathbf{C}_1\mathbf{k}+\mathbf{C}_2\mathbf{k},\frac{1}{2}\mathbf{k}\mathbf{C}_1\mathbf{k}+\frac{1}{2}\mathbf{k}\mathbf{C}_2\mathbf{k})\\
&=[An(\mathbf{C}_1)+An(\mathbf{C}_2)](\mathbf{k})
\end{align}

It is trivial to note that $\mathbf{C}_1\neq\mathbf{C}_2$ iff there exists a $\mathbf{k}\in V^*$ with $\mathbf{C}_1\mathbf{k}\neq\mathbf{C}_2\mathbf{k}$, hence $An$ is injective and therefore an isomorphism onto its image.
\end{proof}

We now define three actions of \(\mathsf{GL}(V)\):\\
\begin{dfn} \label{def:ActSymSec}
    \begin{align}
        \ActSym \colon GL(V) \times \Sym (V) &\to \Sym (V) \\
        \ActSym (\mathbf{M}, \mathbf{C}) &= (\mathbf{M} \otimes \mathbf{M}) \circ \mathbf{C} \\
        \ActSec \colon GL(V) \times \Gamma(\mathrm{proj}) &\to \Gamma(\mathrm{proj}) \\
        \ActSec (\mathbf{M}, s)(\mathbf{k}) &= (\mathbf{M}, 0, 0, 0) \bullet s(\mathbf{kM}) \bullet (\mathbf{M}^{-1}, 0, 0, 0) \\
        \ActFun : GL(V) \times C^{\infty} (V) &\to C^{\infty} (V) \label{def:ActFun}\\
    \mathbf{M}, f &\mapsto \det(\mathbf{M})f \circ \mathbf{M}
    \end{align}
\end{dfn}

\begin{lem}\label{lem:ActSecAutomorphism}
    For all \(\mathbf{M} \in \mathsf{GL}(V)\), the map \(\ActSec (\mathbf{M}, -)  \colon \Gamma(\mathrm{proj}) \to \Gamma(\mathrm{proj})\) is a group automorphism.
\end{lem}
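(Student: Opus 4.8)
The plan is to reduce the conjugation-and-precomposition formula defining $\ActSec(\mathbf{M},-)$ to a transparent closed form, and then check the three things we need: that it lands in $\Gamma(\mathrm{proj})$, that it respects the operation $[+]$, and that it is invertible.

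First I would expand $\ActSec(\mathbf{M},s)(\mathbf{k}) = (\mathbf{M},0,0,0)\bullet s(\mathbf{kM})\bullet(\mathbf{M}^{-1},0,0,0)$ using the group law \autoref{eq:OscGpOper}. Writing $s(\mathbf{j}) = (\id,\mathbf{j},a(\mathbf{j}),b(\mathbf{j}))$, a two-line computation gives
\[
  \ActSec(\mathbf{M},s)(\mathbf{k}) = \bigl(\id,\;\mathbf{k},\;\mathbf{M}\,a(\mathbf{kM}),\;b(\mathbf{kM})\bigr),
\]
so in the language of \autoref{sumofsectionsdfn} the section $\ActSec(\mathbf{M},s)$ has $a$-part $\mathbf{k}\mapsto \mathbf{M}\,a(\mathbf{kM})$ and $b$-part $\mathbf{k}\mapsto b(\mathbf{kM})$. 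In particular its first two slots are $(\id,\mathbf{k})$, so $\mathrm{proj}\circ \ActSec(\mathbf{M},s) = \id_{V^*}$; to see that $\ActSec(\mathbf{M},s)$ is a genuine section one can either verify \autoref{eq:aconds} and \autoref{eq:bconds} directly from the closed form (using linearity of $\mathbf{M}$ and of $\mathbf{k}\mapsto\mathbf{kM}$), or observe more structurally that $\ActSec(\mathbf{M},s)$ is the composite of the group homomorphisms $\mathbf{k}\mapsto\mathbf{kM}$ on $V^*$, then $s$, then conjugation by $(\mathbf{M},0,0,0)$, the last of which preserves $\mathsf{Heis}(V)$ because $\mathsf{Heis}(V)$ is normal in $\mathsf{Osc}(V)$ (this is the factorization $\mathsf{Osc}(V) = \mathsf{GL}(V)\ltimes\mathsf{Heis}(V)$ noted in \autoref{subgroups}).

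Next I would check that $\ActSec(\mathbf{M},-)$ is a homomorphism for the operation $[+]$. If $s_1,s_2$ have $a$-parts $a_i$ and $b$-parts $b_i$, then $[s_1+s_2]$ has $a$-part $a_1+a_2$ and $b$-part $b_1+b_2$ by \autoref{sumofsectionsdfn}, so the closed form shows $\ActSec(\mathbf{M},[s_1+s_2])$ has $a$-part $\mathbf{k}\mapsto\mathbf{M}(a_1+a_2)(\mathbf{kM}) = \mathbf{M}\,a_1(\mathbf{kM}) + \mathbf{M}\,a_2(\mathbf{kM})$ (linearity of $\mathbf{M}$) and $b$-part $\mathbf{k}\mapsto b_1(\mathbf{kM})+b_2(\mathbf{kM})$; these are exactly the $a$- and $b$-parts of $[\ActSec(\mathbf{M},s_1)+\ActSec(\mathbf{M},s_2)]$, so the two sections agree.

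Finally, for invertibility I would observe that $\ActSec$ is in fact a left action of $\mathsf{GL}(V)$ on $\Gamma(\mathrm{proj})$: directly from the conjugation formula, $\ActSec(\mathbf{L},\ActSec(\mathbf{M},s))(\mathbf{k}) = (\mathbf{LM},0,0,0)\bullet s\bigl(\mathbf{k}(\mathbf{LM})\bigr)\bullet\bigl((\mathbf{LM})^{-1},0,0,0\bigr) = \ActSec(\mathbf{LM},s)(\mathbf{k})$, and $\ActSec(\id,-) = \id$. Hence $\ActSec(\mathbf{M},-)$ and $\ActSec(\mathbf{M}^{-1},-)$ are mutually inverse, and combined with the previous paragraph $\ActSec(\mathbf{M},-)$ is a bijective homomorphism, i.e. an automorphism of $(\Gamma(\mathrm{proj}),[+])$. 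I do not expect a real obstacle here; the only delicate point is bookkeeping in the semidirect-product multiplication — in particular, that the left $\mathbf{M}$-twist on the $a$-part and the right $\mathbf{M}$-twist on the argument are precisely what make both the action and homomorphism properties go through, and that the $\mathbf{j}(\mathbf{v})$ cross-term in \autoref{eq:OscGpOper} never contributes because the outer factors $(\mathbf{M},0,0,0)$ and $(\mathbf{M}^{-1},0,0,0)$ have vanishing $V^*$- and $V$-components.
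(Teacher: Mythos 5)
Your proposal is correct and follows essentially the same route as the paper's proof: both reduce the conjugation formula to the closed form $\ActSec(\mathbf{M},s)(\mathbf{k}) = (\id,\mathbf{k},\mathbf{M}a(\mathbf{kM}),b(\mathbf{kM}))$, verify compatibility with $[+]$ slot by slot, and obtain bijectivity from the fact that $\ActSec(\mathbf{M}^{-1},-)$ inverts $\ActSec(\mathbf{M},-)$. Your additional check that the image really lands in $\Gamma(\mathrm{proj})$ (via \autoref{eq:aconds}--\autoref{eq:bconds} or normality of $\mathsf{Heis}(V)$) is a welcome bit of care that the paper leaves implicit, but it does not change the argument.
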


\begin{proof}
    From the definition of \(\ActSec\), it immediately follows that \(\ActSec (\mathbf{M}^{-1}, -)\) is the inverse of \(\ActSec (\mathbf{M}, -)\) so \(\ActSec (\mathbf{M}, -)\) is a bijection.
    
    We note that
    \begin{align}
        &\hskip -6 pt (\mathbf{M}, 0, 0, 0) \bullet (\id, \mathbf{kM}, \mathbf{a}, b) \bullet (\mathbf{M}^{-1}, 0, 0, 0) \\
        &= (\mathbf{M}, 0, 0, 0) \bullet (\mathbf{M}^{-1}, \mathbf{k}, \mathbf{a}, b) \\
        &= (\id, \mathbf{k}, \mathbf{Ma}, b).
    \end{align}
    so, given a section \(s (\mathbf{k}) = (\id, \mathbf{k}, \mathbf{a} (\mathbf{k}),  b (\mathbf{k}))\),
    we have
    \begin{equation}\label{eq:ActSect}
        \ActSec (\mathbf{M}, s) (\mathbf{k}) = (\id, \mathbf{k}, \mathbf{Ma} (\mathbf{kM}),  b (\mathbf{kM})).
    \end{equation}

    Using this fact, we check that \(\ActSec (\mathbf{M}, -)\) is group homomorphism:
    \begin{align}
        &\hskip -6 pt \ActSec (\mathbf{M}, \id_{\Gamma(\mathrm{proj})})(\mathbf{k}) \\
        &=  (\id, \mathbf{k}, 0, 0) \\
        &= \id_{\Gamma(\mathrm{proj})}(\mathbf{k}) \\
        &\hskip -6 pt \ActSec (\mathbf{M}, [s_1 + s_2])(\mathbf{k}) \\ &= (\id, \mathbf{k}, \mathbf{M}(\mathbf{a}_1 (\mathbf{kM}) + \mathbf{a}_2 (\mathbf{kM})),  b_1 (\mathbf{kM}) + b_2(\mathbf{kM})) \\ &= (\id, \mathbf{k}, \mathbf{Ma}_1 (\mathbf{kM}) + \mathbf{Ma}_2 (\mathbf{kM}),  b_1 (\mathbf{kM}) + b_2 (\mathbf{kM})) \\
        &= [\ActSec (\mathbf{M}, s_1) + \ActSec (\mathbf{M}, s_2)](\mathbf{k})
    \end{align}
\end{proof}

\begin{lem}\label{lem:conjugation}
     For all \(\mathbf{M} \in \mathsf{GL}(V)\) and all \(\mathbf{C} \in \Sym (V)\), we have \(\An (\ActSym (\mathbf{M}, \mathbf{C})) = \ActSec (\mathbf{M}, \An (\mathbf{C}))\).
\end{lem}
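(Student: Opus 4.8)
The plan is to prove the identity by evaluating both sides at an arbitrary covector $\mathbf{k} \in V^*$ and comparing the four components of the resulting elements of $\mathsf{Osc}(V)$. Both $\An(\ActSym(\mathbf{M},\mathbf{C}))$ and $\ActSec(\mathbf{M},\An(\mathbf{C}))$ are maps $V^* \to \mathsf{Heis}(V)$ of the form $\mathbf{k} \mapsto (\id, \mathbf{k}, -, -)$, so the first two slots agree automatically and only the $V$-component and the $\mathbb{R}$-component need to be checked.

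First I would apply the explicit formula for $\ActSec$ derived in \autoref{eq:ActSect}: for a section $s(\mathbf{k}) = (\id, \mathbf{k}, \mathbf{a}(\mathbf{k}), b(\mathbf{k}))$ one has $\ActSec(\mathbf{M},s)(\mathbf{k}) = (\id, \mathbf{k}, \mathbf{Ma}(\mathbf{kM}), b(\mathbf{kM}))$. Taking $s = \An(\mathbf{C})$, so that $\mathbf{a}(\mathbf{k}) = \mathbf{Ck}$ and $b(\mathbf{k}) = \tfrac12 \mathbf{kCk}$, this gives
\begin{equation}
  \ActSec(\mathbf{M}, \An(\mathbf{C}))(\mathbf{k}) = \left(\id,\ \mathbf{k},\ \mathbf{M}\bigl(\mathbf{C}(\mathbf{kM})\bigr),\ \tfrac12 (\mathbf{kM})\mathbf{C}(\mathbf{kM})\right).
\end{equation}
On the other side, writing $\mathbf{C}' = \ActSym(\mathbf{M},\mathbf{C}) = (\mathbf{M}\otimes\mathbf{M})\circ\mathbf{C}$ (which is again a symmetric tensor, so $\An$ applies), the definition of $\An$ gives $\An(\mathbf{C}')(\mathbf{k}) = (\id, \mathbf{k}, \mathbf{C}'\mathbf{k}, \tfrac12 \mathbf{kC}'\mathbf{k})$. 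So it remains to verify the two identities $\mathbf{C}'\mathbf{k} = \mathbf{M}\bigl(\mathbf{C}(\mathbf{kM})\bigr)$ and $\mathbf{kC}'\mathbf{k} = (\mathbf{kM})\mathbf{C}(\mathbf{kM})$.

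Both identities I would verify by expanding $\mathbf{C} = \sum_i u_i \otimes v_i$ and using the two conventions from \autoref{sec:notation}: that $(\mathbf{M}\otimes\mathbf{M})\circ\mathbf{C} = \sum_i \mathbf{M}u_i \otimes \mathbf{M}v_i$, and that $\mathbf{k}(\mathbf{Mv}) = (\mathbf{kM})(\mathbf{v})$ by the defining composition diagram \autoref{dualmatrixcomposite}. Then $\mathbf{C}'\mathbf{k} = \sum_i \mathbf{k}(\mathbf{M}v_i)\,\mathbf{M}u_i = \sum_i (\mathbf{kM})(v_i)\,\mathbf{M}u_i = \mathbf{M}\bigl(\sum_i (\mathbf{kM})(v_i)\,u_i\bigr) = \mathbf{M}\bigl(\mathbf{C}(\mathbf{kM})\bigr)$, and $\mathbf{kC}'\mathbf{k} = \sum_i \mathbf{k}(\mathbf{M}u_i)\,\mathbf{k}(\mathbf{M}v_i) = \sum_i (\mathbf{kM})(u_i)\,(\mathbf{kM})(v_i) = (\mathbf{kM})\mathbf{C}(\mathbf{kM})$.

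I do not expect a genuine obstacle here: the statement is essentially the assertion that the two ways of building the relevant section — first transform the tensor $\mathbf{C}$, or first transform the covector input and then push the output forward by $\mathbf{M}$ — coincide, which is just the naturality relating $\mathbf{M}$ and its transpose $\mathbf{k}\mapsto\mathbf{kM}$. The only thing requiring care is bookkeeping with the nonstandard contraction notation, in particular keeping straight which factor of $\mathbf{C}$ is contracted against $\mathbf{k}$ and confirming that $\ActSym$ and $\An$ both respect symmetry so that all the expressions are well-defined; conceptually, \autoref{lem:conjugation} then says that $\An$ intertwines $\ActSym$ on $\Sym(V)$ with $\ActSec$ on $\Gamma(\mathrm{proj})$, which combined with \autoref{lem:HomImg} and \autoref{lem:ActSecAutomorphism} identifies the image of $\An$ as a $\mathsf{GL}(V)$-stable subgroup.
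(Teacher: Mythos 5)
Your proposal is correct and follows essentially the same route as the paper: evaluate at an arbitrary $\mathbf{k}$, apply the explicit formula \autoref{eq:ActSect} for $\ActSec$, and identify the resulting third and fourth slots with $\ActSym(\mathbf{M},\mathbf{C})\mathbf{k}$ and $\tfrac12\mathbf{k}\ActSym(\mathbf{M},\mathbf{C})\mathbf{k}$. The only difference is that you expand $\mathbf{C}=\sum_i u_i\otimes v_i$ to verify the contraction identities explicitly, where the paper simply cites the definition of $\ActSym$; this is a harmless (and arguably welcome) extra level of detail.
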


\begin{proof}
    Using the fact that \(\An (\mathbf{C}) = (\mathbf{id}, \mathbf{k}, \mathbf{Ck}, \tfrac{1}{2} \mathbf{kCk})\), we compute:
    \begin{alignat}{3}
        &\hskip -6pt \ActSec (\mathbf{M}, \An (\mathbf{C}))(\mathbf{k}) \\
        &= (\mathbf{id}, \mathbf{k}, \mathbf{MC}(\mathbf{kM}), \tfrac{1}{2} (\mathbf{kM})\mathbf{C}(\mathbf{kM})) &&\text{By \autoref{eq:ActSect}} \\
        &= (\mathbf{id}, \mathbf{k}, \ActSym (\mathbf{M}, \mathbf{C}) \mathbf{k}, \tfrac{1}{2} \mathbf{k} \ActSym (\mathbf{M}, \mathbf{C})\mathbf{k}) \qquad &&\text{Definition of }\ActSym \\
        &= \An (\ActSym (\mathbf{M}, \mathbf{C})) (\mathbf{k}) &&\text{By \autoref{eq:ActSect}}
    \end{alignat}    
\end{proof}

Hence we see that, for $\mathbf{C},\mathbf{C}'\in\Sym(V)$, we have $\mathrm{An}(\mathbf{C})$ conjugate to $\mathrm{An}(\mathbf{C}')$ in $\mathsf{Osc}(V)$.
We can summarize the foregoing result with the commutative diagram
\begin{equation} \label{eq:naturality}
    \xymatrix @C 42 pt {
    \Sym (V) \ar[r]^{\An} 
    \ar[d]_{\ActSym (\mathbf{M}, -)} & \Gamma(\mathrm{proj}) \ar[d]^{\ActSec(\mathbf{M}, -)} \\
    \Sym (V) \ar[r]^{\An} & \Gamma(\mathrm{proj})
    }
\end{equation}
which shows that \(\An\) is a homomrphism of actions.\cite{wells1976some}
  Using \(\ActSym\), we can define a semidirect product \(\mathsf{GL}(V) \ltimes \Sym\) and using \(\ActSec\), we can define a semidirect product \(\mathsf{GL}(V) \ltimes \Gamma(\mathrm{proj})\). Recall that the horizontal arrows are group homomrphisms by \autoref{lem:HomImg} so this is a diagram in the category of groups.  Hence, the homomrphism of actions decribed by this diagram means that we have a group homomorphism
\begin{equation}
    \xymatrix @C 54 pt {
        \mathsf{GL}(V) \ltimes \Sym 
        \ar[r]^{\id \times \An}
        & \mathsf{GL}(V) \ltimes \Gamma(\mathrm{proj}).
        }
\end{equation}

\begin{rmk}
    If we think of the action $\mathrm{GL}(V)\curvearrowright \Sym(V)$ as a functor $\ActSym :B\mathrm{GL}(V) \to \mathrm{Grp}$ and similarly for $\ActSec$, then \autoref{eq:naturality} says that $An$ is a natural transformation of these functors.
\end{rmk}

\subsubsection{Action on functions}

Having defined the relevant groups and considered some of their properties, we now act with our group $\mathsf{Osc}(V)$ on functions.\\

\begin{dfn}
    Define the map:
    \begin{align}
        \sigma \colon C^{\infty}(V) \times \mathsf{Osc}(V) &\to C^{\infty}(V) \\
        \sigma (f, (\mathbf{M}, \mathbf{k},  \mathbf{v},  c)) (\mathbf{x})
        &= e^{\mathbf{k} (\mathbf{x}) + c} f(\mathbf{M} \mathbf{x} + \mathbf{v})
    \end{align}
\end{dfn}

\begin{lem}
    The map \(\sigma\) defined above is a group homomorphism.
\end{lem}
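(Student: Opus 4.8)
The plan is to verify directly that $\sigma$ respects the group operation on $\mathsf{Osc}(V)$ given by \autoref{eq:OscGpOper}. Since $\sigma$ is written as a right action, the homomorphism condition to check is
\begin{equation}
    \sigma(f, g_1 \bullet g_2) = \sigma(\sigma(f, g_1), g_2)
\end{equation}
for all $f \in C^\infty(V)$ and $g_1, g_2 \in \mathsf{Osc}(V)$, together with $\sigma(f, (\id, 0, 0, 0)) = f$, which is immediate from the formula. (Strictly, ``group homomorphism'' here means the associated map $\mathsf{Osc}(V)^{\mathrm{op}} \to \mathrm{Aut}(C^\infty(V))$, or equivalently that $\sigma$ defines a right action; I would phrase it as whichever convention the paper has been using.)

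First I would write $g_1 = (\mathbf{L}, \mathbf{j}, \mathbf{u}, a)$ and $g_2 = (\mathbf{M}, \mathbf{k}, \mathbf{v}, b)$, so that by \autoref{eq:OscGpOper} their product is $(\mathbf{LM},\, \mathbf{jM} + \mathbf{k},\, \mathbf{u} + \mathbf{Lv},\, a + b + \mathbf{j}(\mathbf{v}))$. Expanding the left-hand side gives
\begin{equation}
    \sigma(f, g_1 \bullet g_2)(\mathbf{x}) = e^{(\mathbf{jM}+\mathbf{k})(\mathbf{x}) + a + b + \mathbf{j}(\mathbf{v})}\, f\!\left(\mathbf{LM}\mathbf{x} + \mathbf{u} + \mathbf{Lv}\right).
\end{equation}
For the right-hand side, I would first compute $h := \sigma(f, g_1)$, so $h(\mathbf{y}) = e^{\mathbf{j}(\mathbf{y}) + a} f(\mathbf{L}\mathbf{y} + \mathbf{u})$, and then apply $\sigma(-, g_2)$, which substitutes $\mathbf{y} = \mathbf{M}\mathbf{x} + \mathbf{v}$ and multiplies by $e^{\mathbf{k}(\mathbf{x}) + b}$:
\begin{equation}
    \sigma(h, g_2)(\mathbf{x}) = e^{\mathbf{k}(\mathbf{x}) + b}\, e^{\mathbf{j}(\mathbf{M}\mathbf{x} + \mathbf{v}) + a}\, f\!\left(\mathbf{L}(\mathbf{M}\mathbf{x} + \mathbf{v}) + \mathbf{u}\right).
\end{equation}
Then I would match the two expressions termwise: the linear argument of $f$ agrees since $\mathbf{L}(\mathbf{M}\mathbf{x}+\mathbf{v}) + \mathbf{u} = \mathbf{LM}\mathbf{x} + \mathbf{Lv} + \mathbf{u}$, and the exponents agree because $\mathbf{j}(\mathbf{M}\mathbf{x}) = (\mathbf{jM})(\mathbf{x})$ by the contraction convention of \autoref{sec:notation} (the defining feature of the $\mathsf{IGL}$ semidirect product), and the scalar terms $\mathbf{k}(\mathbf{x}) + b + \mathbf{j}(\mathbf{v}) + a$ line up exactly with $(\mathbf{jM}+\mathbf{k})(\mathbf{x}) + a + b + \mathbf{j}(\mathbf{v})$.

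This is a routine bookkeeping verification; the only place to be careful is the convention-matching — keeping straight that $\sigma$ is a \emph{right} action so the order of composition is reversed relative to a left action, and that the cross-term $\mathbf{j}(\mathbf{v})$ in the group law is precisely what is needed to absorb the constant produced when $e^{\mathbf{j}(\mathbf{y})}$ is evaluated at the shifted point $\mathbf{y} = \mathbf{M}\mathbf{x} + \mathbf{v}$. I do not anticipate any genuine obstacle; the lemma is essentially the statement that the faithful matrix representation of $\mathsf{Osc}(V)$ exhibited earlier acts on functions in the obvious affine-plus-exponential-twist way, and the computation is the $C^\infty(V)$ shadow of that matrix multiplication.
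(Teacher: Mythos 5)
Your proposal is correct and follows essentially the same route as the paper: both verify that the identity acts trivially and then expand $\sigma(\sigma(f,(\mathbf{L},\mathbf{j},\mathbf{u},a)),(\mathbf{M},\mathbf{k},\mathbf{v},b))$ to match it against $\sigma(f,(\mathbf{L},\mathbf{j},\mathbf{u},a)\bullet(\mathbf{M},\mathbf{k},\mathbf{v},b))$ using the group law of \autoref{eq:OscGpOper}. Your explicit remark that the statement should be read as a right action (equivalently a homomorphism out of $\mathsf{Osc}(V)^{\mathrm{op}}$) is a point the paper leaves implicit, but the computation is the same.
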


\begin{proof}
    We check the defining conditions.  The identity behaves as it should, 
    \begin{equation}
        \sigma ((\mathbf{id}, 0, 0, 0), f)(\mathbf{x}) = f(\mathbf{x}).
    \end{equation}
    Acting twice and simplifying,
    \begin{align}
        \sigma \big( \sigma(f, (\mathbf{L}, \mathbf{j}, \mathbf{u}, a)),
        (\mathbf{M}, \mathbf{k}, \mathbf{v}, b) \big) 
        &= \sigma \big( e^{\mathbf{j} (\mathbf{x}) + a} f(\mathbf{L} \mathbf{x} + \mathbf{u}), (\mathbf{M}, \mathbf{k}, \mathbf{v}, b) \big) \\
        &= e^{\mathbf{k} \mathbf{x} + b} e^{\mathbf{j} (\mathbf{M} \mathbf{x} + \mathbf{v}) + a} \, f(\mathbf{L} (\mathbf{M} \mathbf{x} + \mathbf{v}) + \mathbf{u}) \\
        &= e^{(\mathbf{k} + \mathbf{jM})(\mathbf{x}) + a + b + \mathbf{j} (\mathbf{v})} \, f \big( \mathbf{LM} \mathbf{x} + \mathbf{L} \mathbf{v} + \mathbf{u} \big) \\
        &= \sigma \big( f, (\mathbf{LM}, \mathbf{jM} + \mathbf{k}, \mathbf{u} + \mathbf{L} \mathbf{v}, a + b + \mathbf{j} (\mathbf{v})) \big) \\
        &= \sigma \big( f,(\mathbf{L}, \mathbf{j}, \mathbf{u}, a) 
        \bullet (\mathbf{M}, \mathbf{k}, \mathbf{v}, b))
    \end{align}
    so \(\sigma\) respects the group operation.
\end{proof}

Next, we examine how the group \(\mathsf{Osc}(V)\) acts on convolutions. Recall the following definition:
\begin{dfn}
For $f,g\in L^1(\mathbb{R})$, we define the \textbf{convolution} of $f$ and $g$ to be the function:
\begin{equation}
(f*g)(x) := \int_{-\infty}^{\infty} f(y)g(x-y)\,dy
\end{equation}
\end{dfn}
Recall this operation is commutative and associative. Hence, $(C^\infty(V),*)$ is a commutative semigroup.

We constructed \(\mathsf{Osc}(V)\) as a semidirect product \(\mathsf{Osc}(V) = (\mathsf{GL}(V) \ltimes V^*) \ltimes (V \oplus \mathbb{R})\).  It turns out that each of the factors acts differently.\\

\begin{lem}\label{lem:ConvAct1}
    For all \(\mathbf{M} \in \mathsf{GL}(V)\) all \(\mathbf{k} \in V^*\), and all functions \(f,g\), we have
    \begin{equation}
        \sigma (f*g, (\mathbf{M}, \mathbf{k}, 0, 0))
        = \det (\mathbf{M}) \, \sigma (f, (\mathbf{M}, \mathbf{k}, 0, 0)) *
        \sigma (g, (\mathbf{M}, \mathbf{k}, 0, 0)).
    \end{equation}
\end{lem}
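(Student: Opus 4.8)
The plan is to verify the identity by a direct computation, writing out both sides as integrals and using the change-of-variables formula. First I would unwind the definition of $\sigma$: since the linear part is $\mathbf{M}$, the dual-vector part is $\mathbf{k}$, and the $V$- and $\mathbb{R}$-parts vanish, we have $\sigma(h,(\mathbf{M},\mathbf{k},0,0))(\mathbf{x}) = e^{\mathbf{k}(\mathbf{x})}\, h(\mathbf{M}\mathbf{x})$ for any $h$. Applying this to $h = f*g$ gives the left-hand side as $e^{\mathbf{k}(\mathbf{x})}(f*g)(\mathbf{M}\mathbf{x}) = e^{\mathbf{k}(\mathbf{x})}\int_V f(\mathbf{y})\,g(\mathbf{M}\mathbf{x}-\mathbf{y})\,d\mathbf{y}$.

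Next I would expand the right-hand side. By definition of convolution and of $\sigma$,
\begin{equation}
    \big(\sigma(f,(\mathbf{M},\mathbf{k},0,0))*\sigma(g,(\mathbf{M},\mathbf{k},0,0))\big)(\mathbf{x}) = \int_V e^{\mathbf{k}(\mathbf{z})} f(\mathbf{M}\mathbf{z})\, e^{\mathbf{k}(\mathbf{x}-\mathbf{z})} g(\mathbf{M}(\mathbf{x}-\mathbf{z}))\, d\mathbf{z}.
\end{equation}
The two exponential factors combine, using linearity of $\mathbf{k}$, to $e^{\mathbf{k}(\mathbf{x})}$, which pulls out of the integral. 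This leaves $e^{\mathbf{k}(\mathbf{x})}\int_V f(\mathbf{M}\mathbf{z})\,g(\mathbf{M}\mathbf{x}-\mathbf{M}\mathbf{z})\,d\mathbf{z}$. Now I would perform the substitution $\mathbf{y} = \mathbf{M}\mathbf{z}$, whose Jacobian is $\det(\mathbf{M})$, so $d\mathbf{z} = \det(\mathbf{M})^{-1}\,d\mathbf{y}$ (taking absolute values as appropriate, or simply noting the paper works with $\det$ directly since $\mathbf{M}\in\mathsf{GL}(V)$). This turns the integral into $\det(\mathbf{M})^{-1}\int_V f(\mathbf{y})\,g(\mathbf{M}\mathbf{x}-\mathbf{y})\,d\mathbf{y}$, which is exactly $\det(\mathbf{M})^{-1}$ times the earlier expression for the left-hand side divided by $e^{\mathbf{k}(\mathbf{x})}$. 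Multiplying through by $\det(\mathbf{M})$ gives the claimed equality.

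The only real subtlety — and the step I'd flag as the main obstacle, though it is minor — is bookkeeping the determinant factor and the orientation/sign conventions implicit in the change of variables, i.e. reconciling the $\det(\mathbf{M})$ appearing on the left of the lemma with the $\ActFun$-style normalization $\det(\mathbf{M}) f\circ\mathbf{M}$ from \autoref{def:ActFun}; one should be careful that $\sigma$ as defined does not already carry a determinant, so the factor genuinely must appear explicitly. I would also remark that convergence of the integrals is not an issue here since the statement is understood for $f,g$ in a class (e.g. $L^1$, or Schwartz, or with the relevant exponential decay) for which the convolution and the substitution are both valid, consistent with the paper's standing finite-dimensional and $C^\infty\cap L^1$-type assumptions.
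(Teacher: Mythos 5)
Your proof is correct and takes essentially the same route as the paper's: expand the convolution integral, use linearity of \(\mathbf{k}\) to handle the exponential factors, and perform the substitution \(\mathbf{y}=\mathbf{M}\mathbf{z}\) to produce the \(\det(\mathbf{M})\) factor. The only difference is that you start from the right-hand side while the paper starts from the left, which is immaterial.
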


\begin{proof}
    \begin{align}
    \sigma (f*g, (\mathbf{M}, \mathbf{k}, 0, 0))
    &= e^{\mathbf{k}(\mathbf{x})} (f * g)(\mathbf{Mx}) \\
    &= e^{\mathbf{k}(\mathbf{x})} \int_V f(\mathbf{Mx}- \mathbf{y}) g(\mathbf{y}) \, d\mathbf{y} \\
    &= \det \mathbf{M} \, e^{\mathbf{k}(\mathbf{x})}
    \int_V f(\mathbf{Mx} - \mathbf{M}\mathbf{y}') \, g(\mathbf{My}') \, d\mathbf{y}' \\
     &= \det \mathbf{M}
    \int_V  e^{\mathbf{k}(\mathbf{x})- \mathbf{k}(\mathbf{y}')} f(\mathbf{Mx} - \mathbf{My}') \, e^{\mathbf{k}(\mathbf{y'})} g(\mathbf{My}') \, d\mathbf{y}' \\
    &= \det \mathbf{M}
    \int_V  e^{\mathbf{k}(\mathbf{x} - \mathbf{y}')} f(\mathbf{M} (\mathbf{x} - \mathbf{y}')) \, e^{\mathbf{k}(\mathbf{y'})} g(\mathbf{My}') \, d\mathbf{y}' \\
    &= \det \mathbf{M} \, \sigma (f, (\mathbf{M}, \mathbf{k}, 0, 0)) *
        \sigma (g, (\mathbf{M}, \mathbf{k}, 0, 0))
    \end{align}
    where \(\mathbf{y} = \mathbf{My}'\) so \(d\mathbf{y} =d(\mathbf{M}\mathbf{y}') = \det\mathbf{M}\,d\mathbf{y}'\)
\end{proof}

\begin{lem}\label{lem:ConvAct2}
    For all \(\mathbf{v} \in V\), all \(c \in \mathbb{R}\) and all functions \(f,g\), we have
    \begin{equation}
        \sigma (f*g, (\mathbf{id}, 0, \mathbf{v}, c))
        = \sigma (f. (\mathbf{id}, 0, \mathbf{v}, c)) * g
        = f * \sigma (g, (\mathbf{id}, 0, \mathbf{v}, c)).
    \end{equation}
\end{lem}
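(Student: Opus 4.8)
The plan is to prove this by directly unwinding the definition of $\sigma$ on elements of the special form $(\mathbf{id}, 0, \mathbf{v}, c)$ and comparing the resulting integrals term by term. Observe first that $\sigma(h, (\mathbf{id}, 0, \mathbf{v}, c))(\mathbf{x}) = e^{c} h(\mathbf{x} + \mathbf{v})$ for any $h$; i.e.\ this subgroup acts on $C^\infty(V)$ by a translation of the argument composed with multiplication by the constant $e^c$. So the content of the lemma is that translating-and-rescaling the convolution $f * g$ is the same as translating-and-rescaling either factor alone, which matches the intuition that convolution is bilinear and translation-covariant in each slot separately.

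Concretely, I would write out the left-hand side as
\begin{equation}
\sigma(f*g, (\mathbf{id}, 0, \mathbf{v}, c))(\mathbf{x}) = e^{c}\,(f*g)(\mathbf{x} + \mathbf{v}) = e^{c} \int_V f(\mathbf{x} + \mathbf{v} - \mathbf{y})\, g(\mathbf{y})\, d\mathbf{y},
\end{equation}
and the first of the two right-hand expressions as
\begin{equation}
\big(\sigma(f, (\mathbf{id}, 0, \mathbf{v}, c)) * g\big)(\mathbf{x}) = \int_V \sigma(f, (\mathbf{id}, 0, \mathbf{v}, c))(\mathbf{x} - \mathbf{y})\, g(\mathbf{y})\, d\mathbf{y} = \int_V e^{c}\, f(\mathbf{x} - \mathbf{y} + \mathbf{v})\, g(\mathbf{y})\, d\mathbf{y},
\end{equation}
so the first equality is immediate: the factor $e^c$ and the shift by $\mathbf{v}$ simply pass in and out of the integral over the first variable with no change of variables needed.

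For the second equality I would expand $\big(f * \sigma(g, (\mathbf{id},0,\mathbf{v},c))\big)(\mathbf{x}) = \int_V f(\mathbf{x} - \mathbf{y})\, e^{c}\, g(\mathbf{y} + \mathbf{v})\, d\mathbf{y}$ and then substitute $\mathbf{y}' = \mathbf{y} + \mathbf{v}$. Since this substitution is a pure translation, the Lebesgue measure is invariant ($d\mathbf{y}' = d\mathbf{y}$), and the integral becomes $\int_V e^{c}\, f(\mathbf{x} - \mathbf{y}' + \mathbf{v})\, g(\mathbf{y}')\, d\mathbf{y}'$, which is exactly the expression obtained above. There is essentially no obstacle here: the only thing to be careful about is that, unlike in \autoref{lem:ConvAct1} where the change of variables was linear and produced a Jacobian factor $\det\mathbf{M}$, here the change of variables is a translation, so no Jacobian appears — which is why the statement has no determinant prefactor. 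I would also note in passing that commutativity of $*$ gives the equality of the two outer expressions for free, but writing out both substitutions makes the symmetry in the two arguments transparent.
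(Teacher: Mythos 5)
Your proposal is correct and follows essentially the same route as the paper's proof: unwind $\sigma(h,(\mathbf{id},0,\mathbf{v},c))(\mathbf{x}) = e^{c}h(\mathbf{x}+\mathbf{v})$, expand the convolution integrals, and use translation invariance of the measure for the second equality. Your version is only slightly more explicit about the change of variables $\mathbf{y}' = \mathbf{y}+\mathbf{v}$, which the paper leaves implicit.
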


\begin{proof}
    \begin{align}
        \sigma (f*g, (\mathbf{id}, 0, \mathbf{v}, c))
        &= e^c (f * g) (\mathbf{x} + \mathbf{v}) \\
    &= \int_V e^c f(\mathbf{x} + \mathbf{v} - \mathbf{y}) \,
    g (\mathbf{y}) \, d \mathbf{y} 
    = \sigma (f. (\mathbf{id}, 0, \mathbf{v}, c)) * g \\
    &= \int_V f(\mathbf{x}  - \mathbf{y}) \,
    e^c g (\mathbf{y} + \mathbf{v}) \, d \mathbf{y}
    = f * \rho (g, (\mathbf{id}, 0, \mathbf{v}, c))
    \end{align}
\end{proof} 
\subsection{Group theoretic characterization of Gaussians}

Returning to our earlier comments, in the algebraic solution, the ground state of the harmonic oscillator is determined up to normalization by the condition that it lies in the kernel of all annihilation operators.  In the Schr{\"o}dinger representation, the ground state wavefunction is a Gaussian function.  We now translate these statements into the language of Lie algebras.\\

\begin{dfn}
    Recall, the \emph{normalized Gaussian distribution} with mean $\mu=0$ and covariance $\mathbf{C}\in\Symp (V)$ is given by:
    \begin{equation}
        \mathcal{N} [\mathbf{C}] ( \mathbf{v})
        = \frac{\exp \{-\tfrac{1}{2}
        \mathbf{v} \mathbf{C}^{-1} \mathbf{v}\}}
        {\int_{V} \exp \{-\tfrac{1}{2}
        \mathbf{v'} \mathbf{C}^{-1} \mathbf{v'}\}
        \, d\mathbf{v'}}
    \end{equation}
\end{dfn}

\pagebreak

\begin{thm} \label{thm:GaussChar}
    Equivalently, for $\mathbf{C}\in\Symp (V)$, a function $f$ satisfies:
    \begin{enumerate}
        \item $\sigma(f,\mathit{An}(\mathbf{C})(\mathbf{k}))=f$ for all $\mathbf{k} \in V^*$
        \item $f*1=1$
    \end{enumerate}
    iff $f=\mathcal{N}(\mathbf{C})$.
\end{thm}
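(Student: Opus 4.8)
The plan is to prove both implications by unwinding the definitions of $\sigma$ and $\An$: condition (1) cuts $f$ down to the one-dimensional family of scalar multiples of an unnormalized Gaussian, and condition (2) then fixes the normalization.

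For the easy direction, suppose $f = \mathcal{N}(\mathbf{C})$. Condition (2) is immediate, since $(f*1)(\mathbf{x}) = \int_V f(\mathbf{y})\,d\mathbf{y}$ is constant and $\mathcal{N}(\mathbf{C})$ integrates to $1$ by construction. For condition (1), I would expand
\begin{equation}
    \sigma(\mathcal{N}(\mathbf{C}), \An(\mathbf{C})(\mathbf{k}))(\mathbf{x}) = e^{\mathbf{k}(\mathbf{x}) + \frac12 \mathbf{kCk}}\, \mathcal{N}(\mathbf{C})(\mathbf{x} + \mathbf{Ck})
\end{equation}
and complete the square: using $\mathbf{C}\mathbf{C}^{-1} = \id$ one gets $\mathbf{x}\,\mathbf{C}^{-1}(\mathbf{Ck}) = \mathbf{k}(\mathbf{x})$ and $(\mathbf{Ck})\,\mathbf{C}^{-1}(\mathbf{Ck}) = \mathbf{kCk}$, hence
\begin{equation}
    (\mathbf{x} + \mathbf{Ck})\,\mathbf{C}^{-1}\,(\mathbf{x} + \mathbf{Ck}) = \mathbf{x}\mathbf{C}^{-1}\mathbf{x} + 2\,\mathbf{k}(\mathbf{x}) + \mathbf{kCk},
\end{equation}
so the exponential prefactor cancels the cross and shift terms exactly, returning $\mathcal{N}(\mathbf{C})(\mathbf{x})$.

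For the converse, suppose $f$ satisfies (1) and (2). Condition (1) says
\begin{equation}
    e^{\mathbf{k}(\mathbf{x}) + \frac12 \mathbf{kCk}}\, f(\mathbf{x} + \mathbf{Ck}) = f(\mathbf{x}) \qquad \text{for all } \mathbf{k} \in V^*,\ \mathbf{x} \in V,
\end{equation}
i.e. $f$ is fixed (via $\sigma$) by the abelian subgroup $\img \An(\mathbf{C}) \subseteq \mathsf{Heis}(V)$. Setting $\mathbf{x} = 0$ gives $f(\mathbf{Ck}) = f(0)\, e^{-\frac12 \mathbf{kCk}}$; since $\mathbf{C} \in \Symp(V)$ is invertible, $\mathbf{k} \mapsto \mathbf{Ck}$ is a bijection $V^* \to V$, so substituting $\mathbf{k} = \mathbf{C}^{-1}\mathbf{v}$ (whence $\mathbf{kCk} = \mathbf{v}\mathbf{C}^{-1}\mathbf{v}$) yields $f(\mathbf{v}) = f(0)\, e^{-\frac12 \mathbf{v}\mathbf{C}^{-1}\mathbf{v}}$. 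Equivalently and more conceptually: the square-completion above shows that, setting $h(\mathbf{v}) = e^{\frac12 \mathbf{v}\mathbf{C}^{-1}\mathbf{v}}\,f(\mathbf{v})$, condition (1) is equivalent to $h(\mathbf{x} + \mathbf{Ck}) = h(\mathbf{x})$ for all $\mathbf{k}$, i.e. to $h$ being translation-invariant, hence constant; this realizes the Gaussian as the image of the constant function under the twist $\An(\mathbf{C})$, matching the description of the oscillator ground state. In either case $f$ is a scalar multiple of $\mathbf{v} \mapsto e^{-\frac12 \mathbf{v}\mathbf{C}^{-1}\mathbf{v}}$, and condition (2) then forces $f(0)\int_V e^{-\frac12 \mathbf{v}\mathbf{C}^{-1}\mathbf{v}}\,d\mathbf{v} = 1$, giving $f = \mathcal{N}(\mathbf{C})$.

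I expect no real obstacle: the proof is a short computation. The only points needing care are the index bookkeeping for the contractions $\mathbf{Ck}$, $\mathbf{kCk}$, $\mathbf{x}\mathbf{C}^{-1}(\mathbf{Ck})$, $(\mathbf{Ck})\mathbf{C}^{-1}(\mathbf{Ck})$ in the conventions of \autoref{sec:notation}, and noting that condition (2) is well-posed precisely because condition (1) has already forced $f$ to be a multiple of a genuine (hence integrable) Gaussian.
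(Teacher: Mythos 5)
Your proposal is correct and follows essentially the same route as the paper: both directions rest on unwinding $\sigma(f,\An(\mathbf{C})(\mathbf{k}))(\mathbf{x}) = e^{\mathbf{k}(\mathbf{x})+\frac12\mathbf{kCk}}f(\mathbf{x}+\mathbf{Ck})$, completing the square for the forward direction, and for the converse using surjectivity of $\mathbf{k}\mapsto\mathbf{Ck}$ to pin $f$ down to a multiple of the Gaussian before invoking $f*1=1$ for the normalization. The only cosmetic difference is that the paper phrases the converse via the ratio $f/\mathcal{N}(\mathbf{C})$ being constant, while you solve for $f(\mathbf{v})$ directly from the fixed-point equation at $\mathbf{x}=0$; these are the same argument.
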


\begin{proof}
    Unwinding the definitions, 
    \begin{align}
        &\hskip -6pt \sigma (f, \mathit{An}(\mathbf{C})(\mathbf{k}))(\mathbf{x}) \\ &= \sigma (f, (\id, \mathbf{k}, \mathbf{Ck}, \tfrac{1}{2} \mathbf{kCk}))(\mathbf{x}) \\
        &= e^{\mathbf{k} (\mathbf{x}) + \tfrac{1}{2} \mathbf{kCk}} \, f(\mathbf{x} + \mathbf{Ck})
    \end{align}
    for all \(\mathbf{k} \in V^*\). \\
    $(\Leftarrow)$ we verify that the Gaussian satisfies the first hypothesis:
    \begin{align}
        &\hskip -6pt \sigma (\mathcal{N}(\mathbf{C}), \mathit{An}(\mathbf{C})(\mathbf{k})) \\
        &= \sigma (\mathcal{N}(\mathbf{C})(0) \, \exp\{-\tfrac{1}{2} \mathbf{x} \mathbf{C}^{-1}\mathbf{x}\}, \mathit{An}(\mathbf{C})(\mathbf{k})) \\
        &= \mathcal{N}(\mathbf{C})(0) \, \exp \{-\tfrac{1}{2} (\mathbf{x} + \mathbf{Ck}) \mathbf{C}^{-1} (\mathbf{x} + \mathbf{Ck}) + \mathbf{k} (\mathbf{x}) + \tfrac{1}{2} \mathbf{kCk}\} \\
        &= \mathcal{N}(\mathbf{C})(0) \, \exp\{-\tfrac{1}{2} \mathbf{x} \mathbf{C}^{-1}\mathbf{x}\} \\
        &= \mathcal{N}(\mathbf{C})
    \end{align}
    It also satisfies the second hypothesis because \(\mathcal{N}(\mathbf{C}) * \boldsymbol{1} = \boldsymbol{1} \int_{V} \mathcal{N}(\mathbf{C}) (\mathbf{y}) \, d\mathbf{y} = \boldsymbol{1}\).\\
    $(\Rightarrow)$, if \(f = \sigma(f, \mathit{An}(\mathbf{C})(\mathbf{k}))\) for all \(\mathbf{k} \in V^*\), we have
    \begin{align}
        &\hskip -6pt \frac{f (\mathbf{x})}{\mathcal{N}(\mathbf{C}) (\mathbf{x})} \\
        &= \frac{\sigma(f, \mathit{An}(\mathbf{C})(\mathbf{k})) (\mathbf{x})}{\sigma(\mathcal{N}(\mathbf{C}), \mathit{An}(\mathbf{C})(\mathbf{k})) (\mathbf{x})} \\
        &= \frac{e^{\mathbf{k} (\mathbf{x}) + \tfrac{1}{2} \mathbf{kCk}} f (\mathbf{x} + \mathbf{Ck})}{e^{\mathbf{k} (\mathbf{x}) + \tfrac{1}{2} \mathbf{kCk}} \mathcal{N}(\mathbf{C}) (\mathbf{x}+\mathbf{Ck})} \\
        &= \frac{f (\mathbf{x} + \mathbf{Ck})}{\mathcal{N}(\mathbf{C}) (\mathbf{x} + \mathbf{Ck})}
    \end{align}
    Since \(\mathbf{C}\) is non-degenerate, for every \(\mathbf{x} \in V\) there exists a \(\mathbf{k} \in V^*\) such that \(\mathbf{x} + \mathbf{Ck} = 0\).  Hence, we have \(\mathcal{N}(\mathbf{C})(0) \, f(\mathbf{x}) = f(0) \,\mathcal{N}(\mathbf{C})(\mathbf{x})\).  Convolving both sides against the constant function, \(\mathcal{N}(\mathbf{C})(0) \, f * \boldsymbol{1} = f(0) \, \mathcal{N}(\mathbf{C}) * \boldsymbol{1}\).  Thus, \(f(0) = \mathcal{N}(\mathbf{C})(0)\), so \(f(\mathbf{x}) = \mathcal{N}(\mathbf{C})(\mathbf{x})\) for all \(\mathbf{x}\).
\end{proof}

Using this fact, we may deduce properties of Gaussian distributions.
By combining the pointwise addition of sections with the compatibility of convolution and the action of \(\mathsf{Osc}(V)\), we deduce that the convolution of two Gaussians is a Gaussian.\\

\begin{thm}\label{thm:GaussConv}
  \begin{equation}
    \mathcal{N}(\mathbf{A}) * \mathcal{N}(\mathbf{B}) =
    \mathcal{N}(\mathbf{A} + \mathbf{B})
  \end{equation}
\end{thm}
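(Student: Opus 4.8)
The plan is to show that $f := \mathcal{N}(\mathbf{A}) * \mathcal{N}(\mathbf{B})$ satisfies the two conditions characterizing $\mathcal{N}(\mathbf{A}+\mathbf{B})$ in \autoref{thm:GaussChar}, with $\mathbf{C} = \mathbf{A}+\mathbf{B}$. First note that a sum of two positive-definite symmetric forms is again positive-definite, so $\mathbf{A}+\mathbf{B} \in \Symp(V)$ and the characterization is applicable. Condition (2) is essentially free: by associativity of convolution and $\mathcal{N}(\mathbf{A})*\boldsymbol{1} = \mathcal{N}(\mathbf{B})*\boldsymbol{1} = \boldsymbol{1}$,
\[
  f * \boldsymbol{1} = \mathcal{N}(\mathbf{A}) * \bigl(\mathcal{N}(\mathbf{B}) * \boldsymbol{1}\bigr) = \mathcal{N}(\mathbf{A}) * \boldsymbol{1} = \boldsymbol{1}.
\]

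The substance is condition (1): $\sigma\bigl(f, \An(\mathbf{A}+\mathbf{B})(\mathbf{k})\bigr) = f$ for all $\mathbf{k} \in V^*$. The link between the two Gaussians and their sum is \autoref{lem:HomImg}, which gives $\An(\mathbf{A}+\mathbf{B})(\mathbf{k}) = [\An(\mathbf{A}) + \An(\mathbf{B})](\mathbf{k})$. So what is needed is a compatibility between $\sigma$, convolution, and the pointwise addition $[+]$ of sections: for any sections $s_i(\mathbf{k}) = (\id, \mathbf{k}, a_i(\mathbf{k}), b_i(\mathbf{k}))$ of $\mathsf{proj}$ and any functions $f_1, f_2$,
\[
  \sigma\bigl(f_1 * f_2,\, [s_1 + s_2](\mathbf{k})\bigr) = \sigma\bigl(f_1,\, s_1(\mathbf{k})\bigr) * \sigma\bigl(f_2,\, s_2(\mathbf{k})\bigr).
\]
I would prove this directly: writing $a_i = a_i(\mathbf{k})$, $b_i = b_i(\mathbf{k})$, the left side equals $e^{\mathbf{k}(\mathbf{x}) + b_1 + b_2}\int_V f_1(\mathbf{x} + a_1 + a_2 - \mathbf{y}) f_2(\mathbf{y})\, d\mathbf{y}$; substituting $\mathbf{y} = \mathbf{z} + a_2$ and splitting $e^{\mathbf{k}(\mathbf{x})} = e^{\mathbf{k}(\mathbf{x}-\mathbf{z})} e^{\mathbf{k}(\mathbf{z})}$ turns the integrand into $\sigma(f_1, s_1(\mathbf{k}))(\mathbf{x}-\mathbf{z}) \cdot \sigma(f_2, s_2(\mathbf{k}))(\mathbf{z})$, which is exactly the convolution on the right. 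This is the common generalization of \autoref{lem:ConvAct1} (taking $\mathbf{M} = \id$) and \autoref{lem:ConvAct2}, and it is precisely the ``pointwise addition of sections combined with the compatibility of convolution and the $\mathsf{Osc}(V)$-action'' alluded to in the surrounding text.

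Granting this lemma, apply it with $f_1 = \mathcal{N}(\mathbf{A})$, $f_2 = \mathcal{N}(\mathbf{B})$, $s_1 = \An(\mathbf{A})$, $s_2 = \An(\mathbf{B})$, together with \autoref{lem:HomImg}: then $\sigma(f, \An(\mathbf{A}+\mathbf{B})(\mathbf{k})) = \sigma(\mathcal{N}(\mathbf{A}), \An(\mathbf{A})(\mathbf{k})) * \sigma(\mathcal{N}(\mathbf{B}), \An(\mathbf{B})(\mathbf{k}))$. Each factor on the right is invariant under its own annihilation action---this is condition (1) of \autoref{thm:GaussChar}, checked for Gaussians in the $(\Leftarrow)$ direction of that theorem's proof---so $\sigma(\mathcal{N}(\mathbf{A}), \An(\mathbf{A})(\mathbf{k})) = \mathcal{N}(\mathbf{A})$ and likewise for $\mathbf{B}$, giving $\sigma(f, \An(\mathbf{A}+\mathbf{B})(\mathbf{k})) = \mathcal{N}(\mathbf{A}) * \mathcal{N}(\mathbf{B}) = f$. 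Thus $f$ satisfies condition (1) with $\mathbf{C} = \mathbf{A}+\mathbf{B}$, and combined with condition (2) the biconditional in \autoref{thm:GaussChar} yields $\mathcal{N}(\mathbf{A}) * \mathcal{N}(\mathbf{B}) = \mathcal{N}(\mathbf{A}+\mathbf{B})$.

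I expect the only real obstacle to be isolating and proving the right compatibility lemma: neither \autoref{lem:ConvAct1} nor \autoref{lem:ConvAct2} applies directly to $\An(\mathbf{C})(\mathbf{k})$, whose $V^*$-slot, $V$-slot, and scalar slot are simultaneously nonzero, so one must either set up the section-sum version above or, equivalently, factor $\An(\mathbf{A}+\mathbf{B})(\mathbf{k}) = (\id,0,\mathbf{Ak},\tfrac12\mathbf{kAk}) \bullet (\id,0,\mathbf{Bk},\tfrac12\mathbf{kBk}) \bullet (\id,\mathbf{k},0,0)$ inside $\mathsf{Osc}(V)$ and apply \autoref{lem:ConvAct2} twice and \autoref{lem:ConvAct1} once, using that $\sigma$ is a right action. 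The calculation itself is a routine change of variables; the only analytic care needed is that the substitution and the exponential splitting are legitimate, which holds because we integrate $L^1$ (indeed Schwartz) functions over a finite-dimensional space.
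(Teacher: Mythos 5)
Your proposal is correct and follows essentially the same route as the paper's proof: both verify the two conditions of \autoref{thm:GaussChar} for $\mathcal{N}(\mathbf{A})*\mathcal{N}(\mathbf{B})$ with $\mathbf{C}=\mathbf{A}+\mathbf{B}$, using \autoref{lem:HomImg} to identify $\An(\mathbf{A}+\mathbf{B})$ with $[\An(\mathbf{A})+\An(\mathbf{B})]$ and then distributing the action over the convolution. The factorization $\An(\mathbf{A}+\mathbf{B})(\mathbf{k}) = (\id,0,\mathbf{Ak},\tfrac12\mathbf{kAk})\bullet(\id,0,\mathbf{Bk},\tfrac12\mathbf{kBk})\bullet(\id,\mathbf{k},0,0)$ followed by two applications of \autoref{lem:ConvAct2} and one of \autoref{lem:ConvAct1}, which you offer as the alternative to your direct change-of-variables computation, is precisely the paper's argument.
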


\begin{proof}
    For clarity, we introduce the abbreviations
    \begin{align}
        a_1 &= \mathbf{Ak} &
        a_2 &= \tfrac{1}{2} \mathbf{kAk} \\
        b_1 &= \mathbf{Bk} &
        b_2 &= \tfrac{1}{2} \mathbf{kBk}
    \end{align}
    so that \(\mathit{An}(\mathbf{A})(\mathbf{k}) = (\id, \mathbf{k}, a_1, a_2)\).  Using these abbreviations, we compute
    \begin{alignat}{3}\label{eq:assumption5}
        &\sigma (\mathcal{N}(\mathbf{A}) * \mathcal{N}(\mathbf{B}), (\id, 0, a_1 + b_1, a_2 + b_2)) &&\\
        &= \sigma (\mathcal{N}(\mathbf{A}) * \mathcal{N}(\mathbf{B}), (\id, 0, a_1, a_2) \cdot (\id, 0, b_1, b_2)) &&\text{Factor in }\mathsf{Heis}(V) \\
        &= \sigma (\sigma (\mathcal{N}(\mathbf{A}) * \mathcal{N}(\mathbf{B}), (\id, 0, a_1, a_2)), (\id, 0, b_1, b_2)) \qquad&&\text{Compatibility of }\sigma \\
        &= \sigma (\sigma (\mathcal{N}(\mathbf{A}), (\id, 0, a_1, a_2)) * \mathcal{N}(\mathbf{B}), (\id, 0, b_1, b_2)) \qquad &&\text{\autoref{lem:ConvAct2} with }\mathcal{N}(\mathbf{A}) \\
        &= \sigma (\mathcal{N}(\mathbf{A}), (\id, 0, a_1, a_2)) * \sigma (\mathcal{N}(\mathbf{B}), (\id, 0, b_1, b_2)) &&\text{\autoref{lem:ConvAct2} with }\mathcal{N}(\mathbf{B})
    \end{alignat}
    Building on this result, we show that \(\mathcal{N}(\mathbf{A}) * \mathcal{N}(\mathbf{B})\) is fixed by of \(\An(\mathbf{A} + \mathbf{B})(\mathbf{k})\) fo every every element \(\mathbf{k} \in V^*\):
    \begin{alignat}{3}
        &\sigma (\mathcal{N}(\mathbf{A}) * \mathcal{N}(\mathbf{B}), \An (\mathbf{A} + \mathbf{B})(\mathbf{k})) \\
        &= \sigma (\mathcal{N}(\mathbf{A}) * \mathcal{N}(\mathbf{B}), [\An (\mathbf{A}) + \mathit{An} (\mathbf{B})](\mathbf{k})) &&\text{By \autoref{sumofsectionsequality}}\\
        &= \sigma (\mathcal{N}(\mathbf{A}) * \mathcal{N}(\mathbf{B} , \left( \id, \mathbf{k}, a_1 + b_1, a_2 + b_2 \right) && \text{By \autoref{sumofsectionsdfn}}  \\
        &= \sigma (\mathcal{N}(\mathbf{A}) * \mathcal{N}(\mathbf{B}), \left( \id, 0, a_1 + b_1, a_2 + b_2 \right) \cdot (\id, \mathbf{k}, 0, 0)) &&\text{Factor in }\mathsf{Heis}(V) \\
        &= \sigma (\sigma (\mathcal{N}(\mathbf{A}) * \mathcal{N}(\mathbf{B}), \left( \id, 0, a_1 + b_1, a_2 + b_2 \right)), (\id, \mathbf{k}, 0, 0)) \qquad &&\text{Compatibility of }\sigma \\
        &= \sigma (\sigma (\mathcal{N}(\mathbf{A}), (\id, 0, a_1, a_2)) \\ &\qquad * \sigma (\mathcal{N}(\mathbf{B}), (\id, 0, b_1, b_2)), (\id, \mathbf{k}, 0, 0)) &&\text{By last calculation}\\
        &= \sigma (\sigma (\mathcal{N}(\mathbf{A}), (\id, 0, a_1, a_2)), (\id, \mathbf{k}, 0, 0)) \\ &\qquad* \sigma (\sigma (\mathcal{N}(\mathbf{B}), (\id, 0, b_1, b_2)), (\id, \mathbf{k}, 0, 0)) \quad &&\text{By \autoref{lem:ConvAct1}} \\ 
        &= \sigma (\mathcal{N}(\mathbf{A}), (\id, 0, a_1, a_2) \cdot (\id, \mathbf{k}, 0, 0)) \\ &\qquad * \sigma (\mathcal{N}(\mathbf{B}), (\id, 0, b_1, b_2) \cdot (\id, \mathbf{k}, 0, 0)) &&\text{Compatibility of }\sigma \nonumber \\
        &= \sigma (\mathcal{N}(\mathbf{A}), (\id, \mathbf{k}, a_1, a_2)) * \sigma (\mathcal{N}(\mathbf{B}), (\id, \mathbf{k}, b_1, b_2)) &&\text{Multiply in }\mathsf{Heis}(V) \\
        &= \sigma (\mathcal{N}(\mathbf{A}), \An (\mathbf{A})(\mathbf{k})) * \sigma (\mathcal{N}(\mathbf{B}), \An (\mathbf{B})(\mathbf{k})) && \text{Definition of }\An \\
        &= \mathcal{N}(\mathbf{A}) * \mathcal{N}(\mathbf{B}) &&\text{By \autoref{thm:GaussChar}}
    \end{alignat}
    Taking the convolution with a constant function, 
    \begin{alignat}{3}
        &(\mathcal{N}(\mathbf{A}) * \mathcal{N}(\mathbf{B})) * \boldsymbol{1} \\
        &\quad = \mathcal{N}(\mathbf{A}) * (\mathcal{N}(\mathbf{B}) * \boldsymbol{1}) \quad && \text{Associativity of }* \\
        &\quad = \mathcal{N}(\mathbf{A}) * \boldsymbol{1} &&\text{Normalization of }\mathcal{N}(\mathbf{B}) \\ 
        &\quad = \boldsymbol{1} &&\text{Normalization of }\mathcal{N}(\mathbf{A})
  \end{alignat}
  Recapping, we have shown that \(\sigma (\mathcal{N}(\mathbf{A}) * \mathcal{N}(\mathbf{B}), \An (\mathbf{A} + \mathbf{B})(\mathbf{k}) = \mathcal{N}(\mathbf{A}) * \mathcal{N}(\mathbf{B})\) and that \((\mathcal{N}(\mathbf{A}) * \mathcal{N}(\mathbf{B})) * \boldsymbol{1} = \boldsymbol{1}\).
  Hence, by \autoref{thm:GaussChar}, we conclude that \(\mathcal{N}(\mathbf{A}) *
  \mathcal{N}(\mathbf{B}) = \mathcal{N}(\mathbf{A} + \mathbf{B})\).
\end{proof}

\begin{lem}\label{lem:ActGaus}
  For any \(\mathbf{C} \in \Symp\) and any \(\mathbf{M} \in \mathsf{GL}(V)\), we have \(\ActFun (\mathcal{N}(\mathbf{C}), \mathbf{M}) = \mathcal{N}(\ActSym(\mathbf{M}^{-1},\mathbf{C}))\), or explicitly
  \begin{equation} \label{eq:ActGaus}
      \det(\mathbf{M}) \, \mathcal{N}(\mathbf{C}) \circ \mathbf{M} =\mathcal{N}((\mathbf{M}^{-1} \otimes \mathbf{M}^{-1}) \mathbf{C}).
  \end{equation}
\end{lem}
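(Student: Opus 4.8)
The plan is to recognize $\ActFun(\mathbf{M},-)$ as, up to the scalar $\det(\mathbf{M})$, the action $\sigma$ restricted to the subgroup $\{(\mathbf{M},0,0,0)\}\cong\mathsf{GL}(V)$ of $\mathsf{Osc}(V)$, and then to feed the function $g:=\ActFun(\mathbf{M},\mathcal{N}(\mathbf{C}))=\det(\mathbf{M})\,\mathcal{N}(\mathbf{C})\circ\mathbf{M}$ into the group-theoretic characterization of Gaussians in \autoref{thm:GaussChar}. Comparing the definitions of $\ActFun$ and $\sigma$ gives $\ActFun(\mathbf{M},f)=\det(\mathbf{M})\,\sigma(f,(\mathbf{M},0,0,0))$ for every $f\in C^\infty(V)$. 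I would then set $\mathbf{C}':=\ActSym(\mathbf{M}^{-1},\mathbf{C})=(\mathbf{M}^{-1}\otimes\mathbf{M}^{-1})\mathbf{C}$; since congruence preserves positive-definiteness, $\mathbf{C}'\in\Symp(V)$, so \autoref{thm:GaussChar} applies to it, and it suffices to verify the two conditions (1)~$\sigma(g,\An(\mathbf{C}')(\mathbf{k}))=g$ for all $\mathbf{k}\in V^*$ and (2)~$g*\boldsymbol{1}=\boldsymbol{1}$; the theorem then forces $g=\mathcal{N}(\mathbf{C}')$, which is exactly \autoref{eq:ActGaus}.

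Condition (2) is the one-line change of variables $(g*\boldsymbol{1})(\mathbf{x})=\int_V\det(\mathbf{M})\,\mathcal{N}(\mathbf{C})(\mathbf{My})\,d\mathbf{y}=\int_V\mathcal{N}(\mathbf{C})(\mathbf{y}')\,d\mathbf{y}'=1$, using the normalization of $\mathcal{N}(\mathbf{C})$. For condition (1), the key input is \autoref{lem:conjugation}: applying it with $\mathbf{M}^{-1}$ in place of $\mathbf{M}$ (and using $(\mathbf{M}^{-1},0,0,0)=(\mathbf{M},0,0,0)^{-1}$ in $\mathsf{Osc}(V)$) gives
\[
\An(\mathbf{C}')(\mathbf{k})=(\mathbf{M},0,0,0)^{-1}\bullet\An(\mathbf{C})(\mathbf{kM}^{-1})\bullet(\mathbf{M},0,0,0).
\]
Then, using that $\sigma$ is linear in its first argument and a right action (group homomorphism) in its second,
\begin{align*}
\sigma(g,\An(\mathbf{C}')(\mathbf{k}))
&=\det(\mathbf{M})\,\sigma\bigl(\mathcal{N}(\mathbf{C}),\,(\mathbf{M},0,0,0)\bullet\An(\mathbf{C}')(\mathbf{k})\bigr)\\
&=\det(\mathbf{M})\,\sigma\bigl(\mathcal{N}(\mathbf{C}),\,\An(\mathbf{C})(\mathbf{kM}^{-1})\bullet(\mathbf{M},0,0,0)\bigr)\\
&=\det(\mathbf{M})\,\sigma\bigl(\sigma(\mathcal{N}(\mathbf{C}),\An(\mathbf{C})(\mathbf{kM}^{-1})),\,(\mathbf{M},0,0,0)\bigr)\\
&=\det(\mathbf{M})\,\sigma\bigl(\mathcal{N}(\mathbf{C}),\,(\mathbf{M},0,0,0)\bigr)=g,
\end{align*}
where the fourth equality uses condition~1 of \autoref{thm:GaussChar} for $\mathcal{N}(\mathbf{C})$ at the dual vector $\mathbf{kM}^{-1}$. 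Since $\mathbf{k}\mapsto\mathbf{kM}^{-1}$ is a bijection of $V^*$, this holds for all $\mathbf{k}$, so (1) is established and \autoref{thm:GaussChar} gives $g=\mathcal{N}(\mathbf{C}')$.

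I do not expect a real obstacle here; the argument is bookkeeping with the semidirect-product multiplication \autoref{eq:OscGpOper} and the conjugation formula \autoref{eq:ActSect}. The points that need care are: that $\mathsf{GL}(V)$ acts on $V^*$ on the right, so the substitution forced by \autoref{lem:conjugation} is $\mathbf{k}\mapsto\mathbf{kM}^{-1}$ rather than $\mathbf{M}^{-1}\mathbf{k}$; that conjugation by $(\mathbf{M},0,0,0)$ genuinely leaves the $V^*$-slot of $\mathsf{Heis}(V)$ untouched in the way \autoref{eq:ActSect} records; and the usual orientation caveat in the change of variables, which I suppress by following the paper's convention of writing $\det(\mathbf{M})$ rather than $|\det(\mathbf{M})|$. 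As an alternative, \autoref{eq:ActGaus} can also be checked directly, substituting $\mathbf{v}=\mathbf{Mx}$ in both the numerator and the normalizing integral of $\mathcal{N}(\mathbf{C})\circ\mathbf{M}$ and observing that $\mathbf{C}^{-1}(\mathbf{Mx},\mathbf{Mx})$ is the inverse quadratic form of $(\mathbf{M}^{-1}\otimes\mathbf{M}^{-1})\mathbf{C}$ in the sense of \autoref{eq:inverseForms}; but the group-theoretic route matches the style of the rest of \autoref{Alg} and sidesteps that computation.
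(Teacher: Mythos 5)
Your proposal is correct and follows essentially the same route as the paper's own proof: both verify the two conditions of \autoref{thm:GaussChar} for the transformed function, using \autoref{lem:conjugation} (conjugation by \((\mathbf{M},0,0,0)\)) together with the right-action property of \(\sigma\) for the fixed-point condition, and a normalization argument for the convolution condition. If anything, your bookkeeping is slightly tidier, since you carry the single function \(g=\det(\mathbf{M})\,\mathcal{N}(\mathbf{C})\circ\mathbf{M}\) and the single covariance \(\ActSym(\mathbf{M}^{-1},\mathbf{C})\) through both conditions, whereas the paper's write-up checks the fixed-point property for \(\mathcal{N}(\mathbf{C})\circ\mathbf{M}^{-1}\) and the normalization for \(\det(\mathbf{M})\,\mathcal{N}(\mathbf{C})\circ\mathbf{M}\), leaving the relabeling \(\mathbf{M}\leftrightarrow\mathbf{M}^{-1}\) implicit.
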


\begin{proof}
    For brevity, introduce the abbreviation \(\underline{\mathbf{M}} := (\mathbf{M}, 0 ,0 ,0))\).  We 
    now show that \(\sigma(\mathcal{N}(\mathbf{C}), \underline{\mathbf{M}}^{-1})\) is fixed by \(\An(\ActSym(\mathbf{M},\mathbf{C}))\).  For any \(\mathbf{k} \in V^*\), we compute:
    \begin{alignat}{3}
        &\hskip -6pt \sigma (\mathcal{N}(\mathbf{C}), \underline{\mathbf{M}}^{-1}) \\
        &= \sigma (\sigma (\mathcal{N}(\mathbf{C}), \An(\mathbf{C})(\mathbf{kM})), \underline{\mathbf{M}}^{-1}) &&\text{By \autoref{thm:GaussChar}} \\
        &= \sigma (\mathcal{N}(\mathbf{C}), \An(\mathbf{C})(\mathbf{kM}) \bullet \underline{\mathbf{M}}^{-1}) &&\sigma \text{ is a group action} \\
        &= \sigma (\mathcal{N}(\mathbf{C}), \underline{\mathbf{M}}^{-1} \bullet \underline{\mathbf{M}} \bullet \An(\mathbf{C})(\mathbf{kM}) \bullet \underline{\mathbf{M}}^{-1}) &&\text{Multiply by }\id = \underline{\mathbf{M}}^{-1} \bullet \underline{\mathbf{M}}\\
        &= \sigma ( \sigma (\mathcal{N}(\mathbf{C}), \underline{\mathbf{M}}^{-1}), \underline{\mathbf{M}} \bullet \An(\mathbf{C})(\mathbf{kM}) \bullet \underline{\mathbf{M}}^{-1})\qquad &&\sigma \text{ is a group action} \\
        &= \sigma ( \sigma (\mathcal{N}(\mathbf{C}), \underline{\mathbf{M}}^{-1}), \ActSec(\mathbf{M},\An(\mathbf{C}))(\mathbf{k})) &&\text{Definition of }\ActSec \\
        &= \sigma ( \sigma (\mathcal{N}(\mathbf{C}), \underline{\mathbf{M}}^{-1}), \An(\ActSym(\mathbf{M},\mathbf{C}))(\mathbf{k})) &&\text{By \autoref{lem:conjugation}}
    \end{alignat}

    We also note that, by \autoref{lem:ConvAct1}, we have
    \begin{align}
        \sigma (\mathcal{N}(\mathbf{C}) * 1, \underline{\mathbf{M}}) 
        &= \det (\mathbf{M}) \, \sigma (\mathcal{N}(\mathbf{C}), \underline{\mathbf{M}}) * \sigma (1, \underline{\mathbf{M}}) \\
        &= \det (\mathbf{M}) \, (\mathcal{N}(\mathbf{C}) \circ \mathbf{M}) * 1.
    \end{align}
    Since \(\mathcal{N}(\mathbf{C}) * \boldsymbol{1} = \boldsymbol{1}\) and \(\sigma (\boldsymbol{1}, \underline{\mathbf{M}}) = \boldsymbol{1}\), this means that \(\det (\mathbf{M}) \, (\mathcal{N}(\mathbf{C}) \circ \mathbf{M}) * \boldsymbol{1} = \boldsymbol{1}\).
    Hence, by \autoref{thm:GaussChar}, we have \(\det (\mathbf{M}) \, \mathcal{N}(\mathbf{C}) \circ \mathbf{M} = \mathcal{N}(\mathbf{C}')\).
\end{proof}

We can interpret \autoref{thm:GaussConv} as stating that \(\mathcal{N}\) is a homomorphism from \((\Symp (V), +)\) to \((C^{\infty} (V), *)\). 
We can interpret \autoref{lem:ActGaus} as saying that the two actions $\ActSym$ and $\ActFun$
fit into a commutative diagram:
\begin{equation}\label{eq:ActCom}
\begin{tikzcd}
\Symp(V) \ar[r, "\mathcal N"] \ar[d, "\ActSym(\mathbf{M}{,}-)"'] & C^\infty(V) \ar[d, "\ActFun(\mathbf{M}{,} -)"] \\
\Symp(V) \ar[r, "\mathcal N"] & C^\infty(V)
\end{tikzcd}
\end{equation}
As in \autoref{eq:naturality}, compatibility of $\mathcal{N}$ with the actions means that we have a group homomorphism between semidirect products,
\begin{equation}\label{eq:ActCom}
    \begin{tikzcd}
        \mathsf{GL}(V) \ltimes \Symp(V) \ar[r, "\id \times \mathcal N"]
        &\mathsf{GL}(V) \ltimes C^\infty(V).
    \end{tikzcd}
\end{equation}
In the next section, we will make use of this homomorphism to implement renormalization.
 
\section{Renormalization} \label{sec:renorm}

Having developed the necessary algebra, we now return to physics and carry out renormalization.  As described in the introduction on \autopageref{threeSteps}, this will be done in three steps: ``making blocks'', ``coarse graining'', and ``rescaling''.

To begin, we will rewrite the generating function of \autoref{def:generatingFunction} as a convolution.  Anticipating later developments, we will express the result in terms of the cumulant generating function rather than the moment generating function, but it is easy enough to pass between the two generating functions by exponentiating.

Recall, the propagator $\mathbf{P}\in V\otimes V$ is inverse to the kinetic term of the action $\mathbf{Q}\in V^*\otimes V^*$, there's an action $\mathbf{T} : \mathbb{R}^\times_{>0} \times V \to V$, and a source $\mathbf{J}\in V^*$.\\

\begin{thm} \label{GenAsConvo}
  \begin{align}
    W[\mathbf{P}, I](\mathbf{J})
    &= \tfrac{1}{2} \mathbf{J}\mathbf{P}\mathbf{J}  +
    \log (\mathcal{N}(\mathbf{P}) * \{\exp \circ I\} (\mathbf{P}\mathbf{J}))
  \end{align}
\end{thm}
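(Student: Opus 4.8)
The plan is to evaluate the finite-dimensional Gaussian integral defining $Z[\mathbf{P},I]$ by completing the square in the field variable $\boldsymbol\phi$ and then recognizing the remaining integral as a convolution against the normalized Gaussian $\mathcal{N}(\mathbf{P})$. Since $\mathbf{Q} = \mathbf{P}^{-1}$ in the sense of \autoref{eq:inverseForms}, the combined quadratic-plus-linear part of the exponent rearranges as
\[
-\tfrac{1}{2}\boldsymbol\phi\mathbf{Q}\boldsymbol\phi + \mathbf{J}(\boldsymbol\phi) = -\tfrac{1}{2}(\boldsymbol\phi - \mathbf{P}\mathbf{J})\mathbf{Q}(\boldsymbol\phi - \mathbf{P}\mathbf{J}) + \tfrac{1}{2}\mathbf{J}\mathbf{P}\mathbf{J}.
\]
Verifying this is just a matter of expanding the right-hand side with the contraction rules of \autoref{sec:notation}: using the symmetry of $\mathbf{P}$ and $\mathbf{Q}$ together with $\mathbf{Q}\mathbf{P} = \id$, the cross term collapses to $\mathbf{J}(\boldsymbol\phi)$, and the constant is $\tfrac{1}{2}(\mathbf{P}\mathbf{J})\mathbf{Q}(\mathbf{P}\mathbf{J}) = \tfrac{1}{2}\mathbf{J}(\mathbf{P}\mathbf{Q}\mathbf{P})\mathbf{J} = \tfrac{1}{2}\mathbf{J}\mathbf{P}\mathbf{J}$ since $\mathbf{P}\mathbf{Q}\mathbf{P} = \mathbf{P}$.

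First I would substitute this identity into the integrand of \autoref{def:generatingFunction} and pull the $\boldsymbol\phi$-independent factor $\exp\{\tfrac{1}{2}\mathbf{J}\mathbf{P}\mathbf{J}\}$ out of the integral, leaving
\[
Z[\mathbf{P},I](\mathbf{J}) = e^{\frac{1}{2}\mathbf{J}\mathbf{P}\mathbf{J}}\int_{V} \exp\{-\tfrac{1}{2}(\boldsymbol\phi - \mathbf{P}\mathbf{J})\mathbf{Q}(\boldsymbol\phi - \mathbf{P}\mathbf{J})\}\exp\{I(\boldsymbol\phi)\}\,d\boldsymbol\phi.
\]
Next I would note that, by the definition of $\mathcal{N}$, the function $\exp\{-\tfrac{1}{2}\boldsymbol\psi\mathbf{Q}\boldsymbol\psi\}$ is a fixed scalar multiple of $\mathcal{N}(\mathbf{P})(\boldsymbol\psi)$ — the scalar being the normalization constant $\kappa = \int_V \exp\{-\tfrac{1}{2}\boldsymbol\psi\mathbf{Q}\boldsymbol\psi\}\,d\boldsymbol\psi$ — and that $\mathcal{N}(\mathbf{P})$ is even, since its exponent is a quadratic form. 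Hence the integrand equals $\kappa\,\mathcal{N}(\mathbf{P})(\mathbf{P}\mathbf{J} - \boldsymbol\phi)\exp\{I(\boldsymbol\phi)\}$, and the remaining integral is precisely $\kappa\,(\mathcal{N}(\mathbf{P}) * \{\exp \circ I\})(\mathbf{P}\mathbf{J})$ by the definition of convolution (using translation-invariance of $d\boldsymbol\phi$ to identify the integral with the convolution). Assembling the pieces and taking logarithms gives $W[\mathbf{P},I](\mathbf{J}) = \tfrac{1}{2}\mathbf{J}\mathbf{P}\mathbf{J} + \log\big((\mathcal{N}(\mathbf{P}) * \{\exp \circ I\})(\mathbf{P}\mathbf{J})\big) + \log\kappa$, and the additive constant $\log\kappa$ is immaterial since, as observed just after \autoref{def:generatingFunction}, $Z$ (hence $W$) is only determined up to an overall multiplicative (resp.\ additive) constant — equivalently, one absorbs $\kappa$ into the reference measure $d\boldsymbol\phi$.

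The computation is elementary, so the points that actually require attention are bookkeeping ones. The main obstacle is applying the tensor-contraction conventions of \autoref{sec:notation} consistently in the completion-of-the-square step: one must be sure that $\mathbf{Q}\mathbf{P} = \id$ (and not merely $\mathbf{P}\mathbf{Q} = \id$) and that $\mathbf{P}\mathbf{Q}\mathbf{P} = \mathbf{P}$, so that the cross term becomes exactly $\mathbf{J}(\boldsymbol\phi)$ and the completed-square constant becomes exactly $\tfrac{1}{2}\mathbf{J}\mathbf{P}\mathbf{J}$ as claimed. A secondary, purely expository, point is the handling of the Gaussian normalization $\kappa$, which should either be flagged as harmless via the rescaling remark or folded into the definition of $Z$ from the outset. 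Convergence of the integral poses no difficulty, given the standing finite-dimensionality assumption together with the hypotheses that $\mathbf{P} \in \Symp(V)$ and $I \in C^\infty(V)$.
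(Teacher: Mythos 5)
Your proof is correct and takes essentially the same route as the paper's: complete the square using \(\mathbf{Q}\mathbf{P} = \id\), recognize the shifted Gaussian integral as \((\mathcal{N}(\mathbf{P}) * \{\exp\circ I\})(\mathbf{P}\mathbf{J})\), and take logarithms. Your explicit bookkeeping of the normalization constant \(\kappa\) (the paper handles it by silently dividing by the free partition function, appealing to the same rescaling freedom) and of the evenness of the Gaussian is, if anything, slightly more careful than the paper's version.
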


\begin{proof}
  We begin by completing a square,
  \begin{align}
    &\hskip -6pt \tfrac{1}{2} \boldsymbol{\phi}
    \mathbf{Q} \boldsymbol{\phi}
    - \mathbf{J} (\boldsymbol{\phi}) \\
    &= \tfrac{1}{2} \boldsymbol{\phi}
    \mathbf{Q} \boldsymbol{\phi}
    - \mathbf{J} (\mathbf{P} \mathbf{Q}  \boldsymbol{\phi}) \\
    &= \tfrac{1}{2} (\boldsymbol{\phi} - \mathbf{J}\mathbf{P})
    \mathbf{Q} (\boldsymbol{\phi} - \mathbf{J}\mathbf{P})
    - \tfrac{1}{2}\mathbf{J}\mathbf{P}
    \mathbf{Q} (\mathbf{J}\mathbf{P}) \\
    &= \tfrac{1}{2} (\boldsymbol{\phi} - \mathbf{J}\mathbf{P})
    \mathbf{Q} (\boldsymbol{\phi} - \mathbf{J}\mathbf{P})
    - \tfrac{1}{2}\mathbf{J}\mathbf{P}\mathbf{J} .
  \end{align}
  Inserting the identity we just derived into the definition of the
  moment generating function, we recognize the result as
  convolution with a Gaussian distribution,
  \begin{align}
    Z[\mathbf{P}, I]
    &= \frac{\int_{V} \exp \{-\tfrac{1}{2} \boldsymbol{\phi}
    \mathbf{Q} \boldsymbol{\phi}
    + I (\boldsymbol{\phi}) + \mathbf{J}(\boldsymbol{\phi})\} \,
    \mathrm{d} \boldsymbol{\phi}}
    {\int_{V} \exp \{-\tfrac{1}{2} \boldsymbol{\phi}
    \mathbf{Q} \boldsymbol{\phi}
    \, \mathrm{d} \boldsymbol{\phi}\}} \\
    &= \exp \{\tfrac{1}{2} \mathbf{J}\mathbf{P}\mathbf{J} \}  \int_{V} \frac{\exp \{-\tfrac{1}{2}
    (\boldsymbol{\phi} - \mathbf{J}\mathbf{P})
    \mathbf{Q} (\boldsymbol{\phi} - \mathbf{P}\mathbf{J})\}}{\int_{V} \exp \{-\tfrac{1}{2} \boldsymbol{\phi}
    \mathbf{Q} \boldsymbol{\phi}
    \, \mathrm{d} \boldsymbol{\phi}\}}
    \exp \{I (\boldsymbol{\phi})\} \, d\boldsymbol{\phi} \\
    &= \exp \{\tfrac{1}{2} \mathbf{J}\mathbf{P}\mathbf{J} \} \,
    (\mathcal{N}(\mathbf{P}) * \{\exp \circ I\}) \, (\mathbf{P}\mathbf{J}).
  \end{align}
  Taking a logarithm, we obtain the corresponding identity for the
  cumulant generating function,
  \begin{align}
    W[\mathbf{P}, I](\mathbf{J})
    &= \tfrac{1}{2} \mathbf{J}\mathbf{P}\mathbf{J}  +
    \log (\mathcal{N}(\mathbf{P}) * \{\exp \circ I\}
    \, (\mathbf{P}\mathbf{J})).
  \end{align}
\end{proof} 

We recognize the term \(\tfrac{1}{2} \mathbf{J}\mathbf{P}\mathbf{J}\) as the cumulant generating function for the free field.
As we noted in the introduction, we may also write the formula as
\begin{align}\label{W-two-terms}
  W[\mathbf{P}, I](\mathbf{J})
  &= W[\mathbf{P}, 0](\mathbf{J}) +
  \tilde{W}[\mathbf{P}, I](\mathbf{P}\mathbf{J})
\end{align}
where we define \(\tilde{W}[\mathbf{P}, I] = \log \circ (\mathcal{N}(\mathbf{P}) * \{\exp \circ I\})\colon V \to \mathbb{R}\).

Returning to the three step procedure of renormalization, the first step of ``making blocks" is the simplest. 
Assume we've already employed some method of regularization to move from a continuum theory given by $(\mathbf{P},I)$ to a family of effective theories indexed by a length-scale $(\mathbf{P},I)_L$ with $L\in (0,\infty)$. 
Further assume the choice of regularization is equivariant in the sense that \(\mathbf{P}_{cL} = \ActSym (\mathbf{T}_c, \mathbf{P}_{L})\) for all \(c \in \mathbb{R}^{\times}_{>0}\) and monotonic in the sense that \(\mathbf{P}_{x} - \mathbf{P}_{y}\) is a positive quadratic form when $x<y$.
Let $L_0>0$ denote a fiducial length-scale which we can use to compare theories at different length-scales. 

To ``make blocks", we merely specify a longer length-scale $cL_0$ that we're renormalizing to (i.e. $c>1$).

To ``coarse grain", we can use \autoref{thm:GaussConv} to redo our cumulant generating function $W[\mathbf{P}_{L_0},I]$ in terms of a cut off propagator $\mathbf P_{cL_0}$ by altering the corresponding interaction term $I$ to $I_\text{eff}$ as follows:\\

\begin{lem}\label{eq:PIadj}
  For all \(\mathbf{P}_1, \mathbf{P}_2 \in \Symp (V)\)
   we have \( \tilde{W} [\mathbf{P}_1 + \mathbf{P}_2, I] = \tilde{W} [\mathbf{P}_1, \tilde{W} [\mathbf{P}_2, I]]\).
\end{lem}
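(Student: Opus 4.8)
The plan is to reduce the identity, by exponentiation, to a statement purely about convolutions of Gaussians, and then invoke associativity of convolution together with \autoref{thm:GaussConv}. Since $\tilde{W}[\mathbf{P}, I] = \log \circ (\mathcal{N}(\mathbf{P}) * \{\exp \circ I\})$ and the function $\mathcal{N}(\mathbf{P}) * \{\exp \circ I\}$ is everywhere positive (being a convolution of the strictly positive functions $\mathcal{N}(\mathbf{P})$ and $\exp \circ I$), the map $\exp$ is injective on the relevant functions, so it suffices to prove the identity after applying $\exp \circ (-)$ to both sides. Concretely, I would show
\begin{equation*}
    \exp \circ \tilde{W}[\mathbf{P}_1 + \mathbf{P}_2, I] = \exp \circ \tilde{W}[\mathbf{P}_1, \tilde{W}[\mathbf{P}_2, I]].
\end{equation*}

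First I would unfold the left-hand side: by definition it is simply $\mathcal{N}(\mathbf{P}_1 + \mathbf{P}_2) * \{\exp \circ I\}$, where $\mathbf{P}_1 + \mathbf{P}_2 \in \Symp(V)$ because positive-definite symmetric tensors are closed under addition, so $\mathcal{N}(\mathbf{P}_1 + \mathbf{P}_2)$ is defined. Next I would unfold the right-hand side in two stages. Applying $\exp$ to $\tilde{W}[\mathbf{P}_2, I] = \log \circ (\mathcal{N}(\mathbf{P}_2) * \{\exp \circ I\})$ and using positivity to cancel $\exp \circ \log$, we get $\exp \circ \tilde{W}[\mathbf{P}_2, I] = \mathcal{N}(\mathbf{P}_2) * \{\exp \circ I\}$. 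Substituting this into the definition of $\tilde{W}[\mathbf{P}_1, -]$ gives
\begin{equation*}
    \exp \circ \tilde{W}[\mathbf{P}_1, \tilde{W}[\mathbf{P}_2, I]] = \mathcal{N}(\mathbf{P}_1) * \bigl( \mathcal{N}(\mathbf{P}_2) * \{\exp \circ I\} \bigr).
\end{equation*}
Then I would apply associativity of the convolution semigroup $(C^\infty(V), *)$ to rewrite the right-hand side as $\bigl( \mathcal{N}(\mathbf{P}_1) * \mathcal{N}(\mathbf{P}_2) \bigr) * \{\exp \circ I\}$, and finally invoke \autoref{thm:GaussConv}, which gives $\mathcal{N}(\mathbf{P}_1) * \mathcal{N}(\mathbf{P}_2) = \mathcal{N}(\mathbf{P}_1 + \mathbf{P}_2)$. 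This matches the unfolded left-hand side, completing the argument after taking $\log$ of both sides.

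The main thing to be careful about — rather than a deep obstacle — is the bookkeeping around $\exp \circ \log = \id$: this is legitimate precisely because every function being composed with $\log$ here is a convolution of strictly positive functions and hence strictly positive, so $\log$ is genuinely a left inverse of $\exp$ on the nose. One should also note in passing that the associativity step is exactly the semigroup structure recorded after the definition of convolution, and that no hypothesis beyond $\mathbf{P}_1, \mathbf{P}_2 \in \Symp(V)$ and $I \in C^\infty(V)$ (within the paper's standing finite-dimensionality and integrability conventions) is needed. No analytic estimates are required; the entire content is the interplay of \autoref{thm:GaussConv} with associativity of $*$.
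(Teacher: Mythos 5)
Your proposal is correct and follows essentially the same route as the paper's own proof: both reduce the claim to the identity $\mathcal{N}(\mathbf{P}_1+\mathbf{P}_2) * \{\exp \circ I\} = \mathcal{N}(\mathbf{P}_1) * (\mathcal{N}(\mathbf{P}_2) * \{\exp \circ I\})$ via \autoref{thm:GaussConv} and associativity of convolution, then unwind the definition of $\tilde{W}$. Your extra care about $\exp \circ \log$ cancelling on strictly positive functions is a reasonable bit of bookkeeping that the paper leaves implicit.
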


\begin{proof}
  We compute
  \begin{alignat}{3}
    &\hskip -6pt \mathcal{N}(\mathbf{P}_1 + \mathbf{P}_2) * \{\exp \circ I\} \\
    &= (\mathcal{N}(\mathbf{P}_1)
    * \mathcal{N}(\mathbf{P}_2)) * \{\exp \circ I\} \quad &&\text{By \autoref{thm:GaussConv}} \\
    &= \mathcal{N}(\mathbf{P}_1)
    * (\mathcal{N}(\mathbf{P}_2) * \{\exp \circ I\}) &&\text{Associativity of convolution}
  \end{alignat}
  Inserting this result into the definition of \(\tilde{W}\) and
  simplifying,
  \begin{alignat}{3}
    &\hskip -6pt \tilde{W} [\mathbf{P}_1 + \mathbf{P}_2, I] \\ 
    &= \log (\mathcal{N}(\mathbf{P}_1 + \mathbf{P}_2)* \exp{I} \, ) &&\text{Definition of }\tilde{W}\\
    &= \log (\mathcal{N}(\mathbf{P}_1)
    * (\mathcal{N}(\mathbf{P}_2) * \{\exp \circ I\})) \quad &&\text{Above calculation}\\
    &= \tilde{W} [\mathbf{P}_1,
      \log (\mathcal{N}(\mathbf{P}_2) * \{\exp \circ I\})] &&\text{Definition of }\tilde{W}\\
    &= \tilde{W} [\mathbf{P}_1, \tilde{W} [\mathbf{P}_2, I]].&&\text{Definition of }\tilde{W}
  \end{alignat}
\end{proof}

\begin{cor}
    Taking $\mathbf{P}_{L_0} = \mathbf{P}_1+\mathbf{P}_2$ and $\mathbf{P}_{cL_0}=\mathbf{P}_2$ above (so that $\mathbf{P}_1 = \mathbf{P}_{L_0}-\mathbf{P}_{cL_0}$), 
    we can write the coarse graining from length-scale $L_0$ to $cL_0$ as:
    \begin{equation} \label{eq:coarseGrain}
        \tilde{W} [\mathbf{P}_{L_0}, I] = \tilde{W} [\mathbf{P}_{cL_0}, I_{\mathrm{eff}}]
    \end{equation}
\end{cor}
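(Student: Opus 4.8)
The plan is to reduce the statement to a direct application of \autoref{eq:PIadj} together with the definition $I_{\mathrm{eff}} := \tilde{W}[\mathbf{P}_{L_0} - \mathbf{P}_{cL_0}, I]$. First I would verify that the proposed splitting $\mathbf{P}_{L_0} = (\mathbf{P}_{L_0} - \mathbf{P}_{cL_0}) + \mathbf{P}_{cL_0}$ really is a decomposition into a sum of two elements of $\Symp(V)$, which is exactly the hypothesis of \autoref{eq:PIadj}. The second summand $\mathbf{P}_{cL_0}$ is positive-definite by assumption, since it is a regularized propagator. For the first summand, recall that we chose $c > 1$, so $L_0 < cL_0$, and the monotonicity assumption on the regularization --- that $\mathbf{P}_x - \mathbf{P}_y$ is a positive quadratic form whenever $x < y$ --- guarantees $\mathbf{P}_{L_0} - \mathbf{P}_{cL_0} \in \Symp(V)$.

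With $\mathbf{P}_1 = \mathbf{P}_{L_0} - \mathbf{P}_{cL_0}$ and $\mathbf{P}_2 = \mathbf{P}_{cL_0}$ both in $\Symp(V)$, \autoref{eq:PIadj} applies and gives
\[
\tilde{W}[\mathbf{P}_{L_0}, I] = \tilde{W}[\mathbf{P}_1 + \mathbf{P}_2, I] = \tilde{W}[\mathbf{P}_1, \tilde{W}[\mathbf{P}_2, I]] = \tilde{W}[\mathbf{P}_{cL_0}, \tilde{W}[\mathbf{P}_{L_0} - \mathbf{P}_{cL_0}, I]].
\]
Recognizing the inner expression as $I_{\mathrm{eff}}$ by definition yields the claimed identity $\tilde{W}[\mathbf{P}_{L_0}, I] = \tilde{W}[\mathbf{P}_{cL_0}, I_{\mathrm{eff}}]$.

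Since everything then follows formally from \autoref{eq:PIadj}, I do not expect a real obstacle; the only point requiring care is the bookkeeping of which length scale counts as ``smaller,'' so that monotonicity is invoked in the correct direction, and --- implicitly --- the observation that $\tilde{W}[\mathbf{P}_{L_0} - \mathbf{P}_{cL_0}, I]$ is again a smooth function, so that it makes sense to feed it back into $\tilde{W}$ as an interaction term. I would also add a remark that this is precisely why the coarse-graining flow runs toward longer length scales: the decomposition as a sum in $\Symp(V)$ is available only when $c > 1$, since for $c < 1$ the difference $\mathbf{P}_{L_0} - \mathbf{P}_{cL_0}$ is negative-definite and $\mathcal{N}$ of it is undefined.
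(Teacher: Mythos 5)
Your proposal is correct and takes essentially the same route as the paper, which states this corollary as an immediate substitution into \autoref{eq:PIadj} with no separate proof; your extra check that $\mathbf{P}_{L_0} - \mathbf{P}_{cL_0} \in \Symp(V)$ via monotonicity (using $c>1$) is a worthwhile detail the paper leaves implicit. The only quibble is that with your labeling $\mathbf{P}_1 = \mathbf{P}_{L_0}-\mathbf{P}_{cL_0}$, $\mathbf{P}_2 = \mathbf{P}_{cL_0}$, the lemma literally gives $\tilde{W}[\mathbf{P}_1, \tilde{W}[\mathbf{P}_2, I]] = \tilde{W}[\mathbf{P}_{L_0}-\mathbf{P}_{cL_0}, \tilde{W}[\mathbf{P}_{cL_0}, I]]$, so your last equality silently swaps the two slots --- harmless, since commutativity of $+$ in $\Symp(V)$ (equivalently of convolution) lets you apply the lemma with the summands interchanged, and the paper's own phrasing has the same slip.
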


where \(I_{\text{eff}} = \tilde{W} [\mathbf{P}_{L_0} - \mathbf{P}_{cL_0}, I]\)\\

\begin{rmk}\label{rmk:CgAction}
     We can interpret \autoref{eq:PIadj} as saying $\tilde{W}$ is an action of the monoid $\Symp (V)$ on $C^\infty(V)$ given by:
     \begin{align}
         \Symp (V) \times C^\infty(V) &\to C^\infty(V)\\
         (\mathbf{P},I) &\mapsto \tilde{W}[\mathbf{P},I]
     \end{align}
\end{rmk}

Once we have coarse grained away all fluctuations between length-scales $L_0$ and $cL_0$, we have a generating function $\tilde{W}[\mathbf{P}_{cL_0},I_{\text{eff}}]$ that can give us expectation values at length-scale $cL_0$. 
The difference in expectation values between the scale $L_0$ theory and the scale $cL_0$ theory will result from two confounded factors: the coarse graining and the change in scale. To control for the change in scale, we finish the renormalization process by rescaling the theory back to scale $L_0$. 
This has the additional advantage that we will be able to express the renormalization transform as a semigroup action that only references the fiducial length-scale. 

To ``rescale", we'll use our transform $\mathbf{T}_c: V\to V$ (which abstracts the effect of rescaling spacetime on the space of fields) along with the following general lemma:\\

\begin{lem} \label{lem:rescaling}
  For all \(\mathbf{M} \in \mathsf{GL}(V)\) we have
  \begin{align}
    \tilde{W} [\mathbf{P}, I] \circ \mathbf{M}
    &= \tilde{W} [(\mathbf{M}^{-1}\otimes\mathbf{M}^{-1}) \mathbf{P}, I \circ \mathbf{M}] \label{eq:TactsonWtilde}  \\
    W [\mathbf{P}, I]
   \circ \mathbf{M}^{-1} &= W [(\mathbf{M}^{-1}\otimes\mathbf{M}^{-1})
   \, \mathbf{P}, I \circ \mathbf{M}].
  \end{align}
\end{lem}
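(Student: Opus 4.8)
The plan is to establish the two identities in sequence: \autoref{eq:TactsonWtilde} directly from the definition of $\tilde W$, and then the displayed identity for $W$ from \autoref{eq:TactsonWtilde} together with \autoref{GenAsConvo}.

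For \autoref{eq:TactsonWtilde}, I would unfold $\tilde W[\mathbf P, I] = \log\circ(\mathcal N(\mathbf P)*\{\exp\circ I\})$ and precompose with $\mathbf M$, so that everything reduces to understanding $(\mathcal N(\mathbf P)*\{\exp\circ I\})\circ\mathbf M$. Setting $\mathbf k = 0$ in \autoref{lem:ConvAct1} gives the distributivity rule $(f*g)\circ\mathbf M = \det(\mathbf M)\,(f\circ\mathbf M)*(g\circ\mathbf M)$, which pushes the precomposition through the convolution at the cost of a factor $\det(\mathbf M)$. That factor is exactly consumed by \autoref{lem:ActGaus}, namely $\det(\mathbf M)\,\mathcal N(\mathbf P)\circ\mathbf M = \mathcal N((\mathbf M^{-1}\otimes\mathbf M^{-1})\mathbf P)$; and trivially $\{\exp\circ I\}\circ\mathbf M = \{\exp\circ(I\circ\mathbf M)\}$. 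Taking the logarithm again identifies $(\mathcal N(\mathbf P)*\{\exp\circ I\})\circ\mathbf M = \mathcal N((\mathbf M^{-1}\otimes\mathbf M^{-1})\mathbf P)*\{\exp\circ(I\circ\mathbf M)\}$ as $\tilde W[(\mathbf M^{-1}\otimes\mathbf M^{-1})\mathbf P,\,I\circ\mathbf M]$, which is the claim.

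For the second identity, I would read $W[\mathbf P, I]\circ\mathbf M^{-1}$ as the function $\mathbf J\mapsto W[\mathbf P, I](\mathbf J\mathbf M^{-1})$ on $V^*$ (precomposition on $V^*$ by $\mathbf M^{-1}$ in the contraction notation of \autoref{sec:notation}), write $\mathbf P' := (\mathbf M^{-1}\otimes\mathbf M^{-1})\mathbf P$, and expand $W[\mathbf P', I\circ\mathbf M](\mathbf J)$ using $W[\mathbf Q, J](\mathbf J) = \tfrac12\,\mathbf J\mathbf Q\mathbf J + \tilde W[\mathbf Q, J](\mathbf Q\mathbf J)$ from \autoref{GenAsConvo}. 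The $\tilde W$ summand is handled by the first identity, which gives $\tilde W[\mathbf P', I\circ\mathbf M] = \tilde W[\mathbf P, I]\circ\mathbf M$, hence $\tilde W[\mathbf P', I\circ\mathbf M](\mathbf P'\mathbf J) = \tilde W[\mathbf P, I](\mathbf M(\mathbf P'\mathbf J))$. It then remains to verify the two elementary tensor-notation identities $\mathbf M(\mathbf P'\mathbf J) = \mathbf P(\mathbf J\mathbf M^{-1})$ and $\mathbf J\mathbf P'\mathbf J = (\mathbf J\mathbf M^{-1})\mathbf P(\mathbf J\mathbf M^{-1})$, both immediate on writing $\mathbf P = \sum_i u_i\otimes v_i$ and $\mathbf P' = \sum_i \mathbf M^{-1}u_i\otimes\mathbf M^{-1}v_i$. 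Substituting these in yields $W[\mathbf P', I\circ\mathbf M](\mathbf J) = \tfrac12\,(\mathbf J\mathbf M^{-1})\mathbf P(\mathbf J\mathbf M^{-1}) + \tilde W[\mathbf P, I](\mathbf P(\mathbf J\mathbf M^{-1})) = W[\mathbf P, I](\mathbf J\mathbf M^{-1})$, as required.

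I do not anticipate a genuine difficulty: once \autoref{lem:ConvAct1}, \autoref{lem:ActGaus}, and \autoref{GenAsConvo} are available, both identities are bookkeeping. The one place that deserves care is the tensor-notation algebra in the second part — specifically recognizing that applying $\ActSym(\mathbf M^{-1},-)$ to the propagator is what intertwines precomposition by $\mathbf M$ on $V$ with precomposition by $\mathbf M^{-1}$ on $V^*$ — together with confirming that the $\det(\mathbf M)$ factors from \autoref{lem:ConvAct1} and \autoref{lem:ActGaus} cancel exactly. I would also note at the outset that $\mathcal N(\mathbf P)*\{\exp\circ I\}$ is everywhere positive, so the logarithms appearing throughout are well-defined.
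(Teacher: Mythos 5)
Your proposal is correct and follows essentially the same route as the paper: the first identity via \autoref{lem:ConvAct1} (with $\mathbf{k}=0$) combined with \autoref{lem:ActGaus} to absorb the $\det(\mathbf{M})$ factor, and the second identity by splitting $W$ into its two terms via \autoref{GenAsConvo}, applying the first identity to the $\tilde{W}$ term, and checking the same tensor-contraction identities. The only differences are cosmetic (you expand from the right-hand side rather than the left, and you add a welcome remark about positivity justifying the logarithm).
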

  
\begin{proof}
  Acting on the convolution, we obtain
  \begin{alignat}{3}
    &(\mathcal{N}(\mathbf{P}) * \exp \circ I) \circ \mathbf{M} \\
    &= \sigma(\mathcal{N}(\mathbf{P})*\exp\circ I),(\mathbf{M},0,0,0)) &&\text{Definition of }\sigma\\
    &= \det \mathbf{M} \,
    (\mathcal{N}(\mathbf{P}) \circ \mathbf{M})
    * (\exp \circ I \circ \mathbf{M}) &&\text{By \autoref{lem:ConvAct1}}\\
    &= \mathcal{N}((\mathbf{M}^{-1}\otimes\mathbf{M}^{-1}) \mathbf{P})
    * (\exp \circ I \circ \mathbf{M})\qquad &&\text{By \autoref{eq:ActGaus}}
  \end{alignat}
  Taking the logarithm of both sides, \(\tilde{W} [\mathbf{P}, I]
  \circ \mathbf{M} = \tilde{W} [(\mathbf{M}^{-1}\otimes\mathbf{M}^{-1}) \mathbf{P}, I \circ \mathbf{M}]\).

    Next, we will compute how \(W\) transforms.  We will do this by considering how the two terms in \autoref{GenAsConvo} transform.  We begin by using some linear algebra to rewrite the first term:
  \begin{equation}
   W[\mathbf{P},0](\mathbf{JM}^{-1})=\frac{1}{2}(\mathbf{J}\mathbf{M}^{-1})
   \mathbf{P} (\mathbf{J}\mathbf{M}^{-1})
   =\frac{1}{2}\mathbf{J} ((\mathbf{M}^{-1}\otimes\mathbf{M}^{-1})
   \, \mathbf{P})  \mathbf{J}.
 \end{equation}
 Next we compute the second term:
 \begin{alignat}{3}
   &\tilde{W}[\mathbf{P}, I](\mathbf{P}(\mathbf{J}\mathbf{M}^{-1}) )\\
   &= \tilde{W}[\mathbf{P}, I] (\mathbf{M} \mathbf{M}^{-1}
   \mathbf{P}(\mathbf{J}\mathbf{M}^{-1})) &&\text{Compose by }\id=\mathbf{MM}^{-1} \\
   &= \tilde{W}[\mathbf{P},I]\circ \mathbf{M}(\mathbf{M}^{-1}\mathbf{P}(\mathbf{JM}^{-1}))&&\text{Rewrite by associativity}\\
   &= \tilde{W}[(\mathbf{M}^{-1}\otimes\mathbf{M}^{-1})
   \, \mathbf{P} ,
     I \circ \mathbf{M}] (\mathbf{M}^{-1}
   \mathbf{P}\mathbf{J} \mathbf{M}^{-1} ) &&\text{By \autoref{eq:TactsonWtilde}}\\
   &=\tilde{W}[(\mathbf{M}^{-1}\otimes\mathbf{M}^{-1})\, \mathbf{P} , I \circ \mathbf{M}](((\mathbf{M}^{-1}\otimes\mathbf{M}^{-1})\mathbf{P})\mathbf{J}) \qquad&&\text{By linear algebra}
 \end{alignat}
 Combining these two observations, we conclude that
 \begin{equation}
    W [\mathbf{P}, I]
   \circ \mathbf{M}^{-1} = W [(\mathbf{M}^{-1}\otimes\mathbf{M}^{-1})
   \, \mathbf{P}, I \circ \mathbf{M}].
 \end{equation}
\end{proof}

We can restate \autoref{eq:TactsonWtilde} as a diagram:
\begin{equation}
    \xymatrix @C 72 pt {
        C^\infty (V) \ar[d]_{\ActFun (\mathbf{M},-)} \ar[r]^{\tilde{W}(\mathbf{P}, -)} &  C^\infty (V) \ar[d]^{ \ActFun (\mathbf{M},-)} \\
        C^\infty (V) \ar[r]^{\tilde{W} (\ActSym (\mathbf{M}, \mathbf{P}), -)} &  C^\infty (V)}
\end{equation}  
 
Hence $\ActFun$ and \(\tilde{W}\) combine to give a monoid action that combines the coarse graining and rescaling transforms:
\begin{align}
    \mathop{\mathrm{Cgrl}} \colon \mathsf{GL}(V) \ltimes \Symp (V) &\to \mathop{\mathrm{End}_{\mathrm{Set}}} C^\infty (V) \label{eq:WAction} \\ 
    (\mathbf{M}, \mathbf{P}) &\mapsto (I \mapsto \ActFun (\mathbf{M}, \tilde{W} [\mathbf{P}, I]))
\end{align}

Using \(\mathbf{P}_{cL_0} = \ActSym (\mathbf{T}_c, \mathbf{P}_{L_0}) =  (\mathbf{T}_c \otimes \mathbf{T}_c) \mathbf{P}_{L_0}\), we can write:
\begin{alignat}{3}
    \tilde{W}[P_{L_0}, I] &= \tilde{W}[ (\mathbf{T}_c \otimes \mathbf{T}_c) \mathbf{P}_{L_0}, I_\text{eff}] &&\text{By \autoref{eq:coarseGrain}} \\
    &= \tilde{W}[(\mathbf{T}_c \otimes \mathbf{T}_c) \mathbf{P}_{L_0}, (I_\text{eff}\circ \mathbf{T}_c)\circ\mathbf{T}_c^{-1}] \qquad &&\text{Compose with } \id\\
    & = \tilde{W}[\mathbf{P}_{L_0}, I_\text{eff} \circ \mathbf{T}_c] \circ \mathbf{T}^{-1}_c \qquad&&\text{By \autoref{eq:TactsonWtilde}}
\end{alignat}
where \(I_\text{eff} = \tilde{W}[(\id - \mathbf{T}_c \otimes \mathbf{T}_c) \mathbf{P}_{L_0}, I]\) (or \(=\tilde{W}[\mathbf{P}_{L_0}-\mathbf{P}_{cL_0},I]\) as before).

We introduce some notation to express this more succinctly:\\
\begin{dfn}
    We define the \emph{renormalization transform} \(\mathcal{R}_c\) as
    \begin{align}
        \mathcal{R} \colon \mathbb{R}^\times_{\geq 1} &\to \mathrm{End}_\text{Set}(C^{\infty}(V)) \\
        c &\mapsto (I\mapsto \tilde{W}[(\id - \mathbf{T}_c \otimes \mathbf{T}_c) \mathbf{P}_{L_0}, I] \circ \mathbf{T} _c) \label{eq:renormAction}
    \end{align}
\end{dfn}

Using this notation, we can rewrite 
\begin{equation}
    \tilde{W}[\mathbf{P}_{L_0}, I] = \tilde{W}[\mathbf{P}_{L_0}, \mathcal{R}_c I] \circ \mathbf{T}^{-1}_c
\end{equation}

We will now factor the renormalization transform as composite of three maps.
\begin{equation}
    \mathbb{R}^\times_{\ge 1} \xrightarrow{\mathbf{T}} \mathsf{GL}(V) \xrightarrow{\mathop{\mathrm{Ur}} (\mathbf{P}_{L_0})} \mathsf{GL}(V) \ltimes \Symp (V) \xrightarrow{\mathop{\mathrm{Cgrl}}} \mathop{\mathrm{End}_{\mathrm{Set}}} (C^\infty (V))
\end{equation}

\begin{dfn}
    For each \(\mathbf{C} \in \Symp\), define
    \begin{align}
        \mathop{\mathrm{Ur}} (\mathbf{C}) \colon \mathsf{GL}(V) &\to \mathsf{GL} (V) \ltimes \Symp (V) \\
        \mathbf{M} & \mapsto (\mathbf{M}, \mathbf{C} - \ActSym (\mathbf{M}, \mathbf{C})).
    \end{align}
\end{dfn}

\pagebreak

\begin{lem}
    \(\mathrm{Ur}\) is a group homomorphism.
\end{lem}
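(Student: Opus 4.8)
The plan is to recognize $\mathrm{Ur}(\mathbf{C})$ as the homomorphism attached to a principal crossed homomorphism (a coboundary), which collapses the whole verification to a one-line cancellation. First I would record the group law on the semidirect product $\mathsf{GL}(V)\ltimes\Sym(V)$ built from $\ActSym$ of \autoref{def:ActSymSec}: since $\ActSym$ is a \emph{left} action — indeed $\ActSym(\mathbf{M}_1,\ActSym(\mathbf{M}_2,\mathbf{C}))=(\mathbf{M}_1\mathbf{M}_2\otimes\mathbf{M}_1\mathbf{M}_2)\mathbf{C}=\ActSym(\mathbf{M}_1\mathbf{M}_2,\mathbf{C})$ — the multiplication is
\begin{equation}
(\mathbf{M}_1,\mathbf{S}_1)\bullet(\mathbf{M}_2,\mathbf{S}_2)=\bigl(\mathbf{M}_1\mathbf{M}_2,\ \mathbf{S}_1+\ActSym(\mathbf{M}_1,\mathbf{S}_2)\bigr),
\end{equation}
in agreement with the convention already used for $\mathsf{Osc}(V)$ in \autoref{eq:OscGpOper}. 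Writing $\beta_{\mathbf{C}}(\mathbf{M}):=\mathbf{C}-\ActSym(\mathbf{M},\mathbf{C})$ for the second component of $\mathrm{Ur}(\mathbf{C})(\mathbf{M})$, the claim that $\mathrm{Ur}(\mathbf{C})$ preserves products is precisely the crossed-homomorphism identity $\beta_{\mathbf{C}}(\mathbf{M}_1\mathbf{M}_2)=\beta_{\mathbf{C}}(\mathbf{M}_1)+\ActSym(\mathbf{M}_1,\beta_{\mathbf{C}}(\mathbf{M}_2))$.

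Next I would verify this identity by expanding its right-hand side, using only two facts already in hand: linearity of $\ActSym(\mathbf{M}_1,-)=(\mathbf{M}_1\otimes\mathbf{M}_1)\circ(-)$, and the left-action property above. One obtains
\begin{equation}
\bigl(\mathbf{C}-\ActSym(\mathbf{M}_1,\mathbf{C})\bigr)+\ActSym\bigl(\mathbf{M}_1,\mathbf{C}-\ActSym(\mathbf{M}_2,\mathbf{C})\bigr)=\mathbf{C}-\ActSym(\mathbf{M}_1\mathbf{M}_2,\mathbf{C}),
\end{equation}
the two $\ActSym(\mathbf{M}_1,\mathbf{C})$ terms cancelling; this is $\beta_{\mathbf{C}}(\mathbf{M}_1\mathbf{M}_2)$. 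I would then observe that $\mathrm{Ur}(\mathbf{C})(\id)=(\id,\mathbf{C}-\ActSym(\id,\mathbf{C}))=(\id,0)$ is the identity of the semidirect product, so $\mathrm{Ur}(\mathbf{C})$ is a genuine group homomorphism; invertibility of each value $\mathrm{Ur}(\mathbf{C})(\mathbf{M})$ in the target is then automatic.

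The one point worth a sentence of care is the codomain. For a general $\mathbf{M}\in\mathsf{GL}(V)$ the tensor $\beta_{\mathbf{C}}(\mathbf{M})=\mathbf{C}-\ActSym(\mathbf{M},\mathbf{C})$ is symmetric but need not be positive-definite, so strictly the statement is cleanest read into the group $\mathsf{GL}(V)\ltimes\Sym(V)$. What makes the factorization of $\mathcal{R}$ well-posed is that, under the monotonicity hypothesis on the regularization, $\beta_{\mathbf{P}_{L_0}}(\mathbf{T}_c)=\mathbf{P}_{L_0}-\mathbf{P}_{cL_0}$ is positive-definite for $c\ge 1$, so the composite $\mathrm{Ur}(\mathbf{P}_{L_0})\circ\mathbf{T}$ restricted to $\mathbb{R}^\times_{\ge 1}$ does land in $\mathsf{GL}(V)\ltimes\Symp(V)$; I would add a remark to that effect.

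I do not expect a genuine obstacle: the entire content is the cocycle cancellation displayed above, which amounts to the statement that $\beta_{\mathbf{C}}$ is the coboundary of the $0$-cochain $\mathbf{C}$. The only thing to be vigilant about is bookkeeping — using the left-action form of the semidirect-product law so that the twist $\ActSym(\mathbf{M}_1,-)$ falls on the second factor exactly as it does in $\beta_{\mathbf{C}}(\mathbf{M}_1\mathbf{M}_2)$, and keeping the $\Sym(V)$ versus $\Symp(V)$ distinction explicit.
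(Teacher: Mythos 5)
Your proof is correct and is essentially the paper's own argument: the identity check and the expansion-and-cancellation of $\mathbf{C}-\ActSym(\mathbf{M}_1,\mathbf{C})+\ActSym(\mathbf{M}_1,\mathbf{C}-\ActSym(\mathbf{M}_2,\mathbf{C}))$ are exactly the computation in the paper, merely repackaged in the language of crossed homomorphisms and coboundaries. Your closing remark that $\mathbf{C}-\ActSym(\mathbf{M},\mathbf{C})$ is only symmetric in general, so the natural codomain is $\mathsf{GL}(V)\ltimes\Sym(V)$ with positivity recovered only for $\mathbf{M}=\mathbf{T}_c$, $c\ge 1$, via monotonicity, is a worthwhile precision that the paper glosses over.
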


\begin{proof}
    \begin{align}
        \mathop{\mathrm{Ur}} (\mathbf{C})(\id) &= (\id, \mathbf{C} - \ActSym (\id, \mathbf{C})) \\
        &= (\id, \mathbf{C} - \mathbf{C}) \\
        &= (\id, 0) \\
        &= \id
    \end{align}

    Given \(\mathbf{M}_1, \mathbf{M}_2 \in \mathsf{GL} (V)\), we compute:
    \begin{alignat}{3}
        &\hskip -6 pt \mathop{\mathrm{Ur}} (\mathbf{C})(\mathbf{M}_1) \bullet \mathop{\mathrm{Ur}} (\mathbf{C})(\mathbf{M}_2) \\
        &= (\mathbf{M}_1, \mathbf{C} - \ActSym (\mathbf{M}_1, \mathbf{C})) \bullet (\mathbf{M}_2, \mathbf{C} - \ActSym (\mathbf{M}_2, \mathbf{C})) \\
        &= (\mathbf{M}_1 \mathbf{M}_2, \mathbf{C} - \ActSym (\mathbf{M}_1, \mathbf{C}) + \ActSym (\mathbf{M}_1, \mathbf{C} - \ActSym (\mathbf{M}_2, \mathbf{C}))) \\
        &= (\mathbf{M}_1 \mathbf{M}_2, \mathbf{C} - \ActSym (\mathbf{M}_1, \mathbf{C}) + \ActSym (\mathbf{M}_1, \mathbf{C}) - \ActSym (\mathbf{M}_1,\ActSym (\mathbf{M}_2, \mathbf{C}))) \\
        &= (\mathbf{M}_1 \mathbf{M}_2, \mathbf{C}
        - \ActSym (\mathbf{M}_1,\ActSym (\mathbf{M}_2, \mathbf{C}))) \\
        &= (\mathbf{M}_1 \mathbf{M}_2, \mathbf{C}
        - \ActSym (\mathbf{M}_1 \mathbf{M}_2, \mathbf{C})) \\
        &= \mathop{\mathrm{Ur}} (\mathbf{C})(\mathbf{M}_1 \mathbf{M}_2)
    \end{alignat}
\end{proof}

\begin{rmk}
    Note that \(\mathop{\mathrm{Ur}} (\mathbf{C})\) is a section of the projection map \(\mathsf{GL} (V) \ltimes \Symp (V) \to \mathsf{GL}(V)\).  Furthermore, using the notation of \autoref{sumofsectionsdfn} for addition of sections, we have \([\mathop{\mathrm{Ur}} (\mathbf{C}_1) + \mathop{\mathrm{Ur}} (\mathbf{C}_2)] = \mathop{\mathrm{Ur}} (\mathbf{C}_1 + \mathbf{C}_2)\) by reasoning analogous to \autoref{lem:HomImg}, so \(\mathrm{Ur} \colon \Symp \to \Gamma(\mathsf{GL} (V) \ltimes \Symp (V) \to \mathsf{GL}(V))\) is a homomorphism.\\
\end{rmk}

\begin{lem}
The renormalization transform $\mathcal{R}$ defines a semigroup action.
\end{lem}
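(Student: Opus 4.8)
The statement to prove is that for all $c_1,c_2\in\mathbb{R}^\times_{\ge1}$ one has $\mathcal{R}_{c_1}\circ\mathcal{R}_{c_2}=\mathcal{R}_{c_1c_2}$ (and, so that this is a genuine monoid action of $\mathbb{R}^\times_{\ge1}$ rather than merely a semigroup action of $(1,\infty)$, that $\mathcal{R}_1=\id$). Conceptually this is the content of the factorization $\mathcal{R}=\mathop{\mathrm{Cgrl}}\circ\mathop{\mathrm{Ur}}(\mathbf{P}_{L_0})\circ\mathbf{T}$ displayed above: $\mathbf{T}$ is a monoid homomorphism into $\mathsf{GL}(V)$ (it is a linear group action of $\mathbb{R}^\times_{>0}$, so $\mathbf{T}_{c_1}\mathbf{T}_{c_2}=\mathbf{T}_{c_1c_2}$ and $\mathbf{T}_1=\id$), $\mathop{\mathrm{Ur}}(\mathbf{P}_{L_0})$ is a group homomorphism (just proved), and $\mathop{\mathrm{Cgrl}}$ is an action of $\mathsf{GL}(V)\ltimes\Symp(V)$ on $C^\infty(V)$ assembled from \autoref{eq:PIadj} and \autoref{lem:rescaling}. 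I would nonetheless verify the composition law directly rather than routing through $\mathop{\mathrm{Cgrl}}$, because the determinant normalization built into $\ActFun$ (which is present so that $\ActFun(\mathbf{M},-)$ respects convolution) introduces an overall factor $\det(\mathbf{T}_c)$ that does not appear in $\mathcal{R}_c$ as defined in \autoref{eq:renormAction}; chasing that scalar is a distraction, and it is in any case an innocuous positive constant, irrelevant to the expectation values $\mathcal{R}$ is built to compute.

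For the direct verification, abbreviate $\mathbf{S}_c:=\mathbf{T}_c\otimes\mathbf{T}_c$, and record that $\mathbf{S}$ inherits multiplicativity from $\mathbf{T}$ (so $\mathbf{S}_{c_1}\mathbf{S}_{c_2}=\mathbf{S}_{c_1c_2}$) and that, since $\mathbb{R}^\times$ is abelian, $\mathbf{S}_{c_2}^{-1}(\mathbf{S}_{c_2}-\mathbf{S}_{c_1c_2})=\id-\mathbf{S}_{c_1}$. Starting from
\[
\mathcal{R}_{c_1}(\mathcal{R}_{c_2}I)=\tilde{W}\bigl[(\id-\mathbf{S}_{c_1})\mathbf{P}_{L_0},\ \tilde{W}[(\id-\mathbf{S}_{c_2})\mathbf{P}_{L_0},I]\circ\mathbf{T}_{c_2}\bigr]\circ\mathbf{T}_{c_1},
\]
I would first apply \autoref{eq:TactsonWtilde} to the outer $\tilde{W}$ with $\mathbf{M}=\mathbf{T}_{c_2}$, using the identity above, to pull $\mathbf{T}_{c_2}$ out and rewrite this as
\[
\tilde{W}\bigl[(\mathbf{S}_{c_2}-\mathbf{S}_{c_1c_2})\mathbf{P}_{L_0},\ \tilde{W}[(\id-\mathbf{S}_{c_2})\mathbf{P}_{L_0},I]\bigr]\circ\mathbf{T}_{c_2}\circ\mathbf{T}_{c_1};
\]
then apply \autoref{eq:PIadj} (the composition law for $\tilde{W}$ in its propagator slot), together with the telescoping $(\mathbf{S}_{c_2}-\mathbf{S}_{c_1c_2})\mathbf{P}_{L_0}+(\id-\mathbf{S}_{c_2})\mathbf{P}_{L_0}=(\id-\mathbf{S}_{c_1c_2})\mathbf{P}_{L_0}$ and the relation $\mathbf{T}_{c_2}\circ\mathbf{T}_{c_1}=\mathbf{T}_{c_1c_2}$, to collapse everything to $\tilde{W}[(\id-\mathbf{S}_{c_1c_2})\mathbf{P}_{L_0},I]\circ\mathbf{T}_{c_1c_2}=\mathcal{R}_{c_1c_2}I$. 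The telescoping of propagators $\mathbf{P}_{c_2L_0}-\mathbf{P}_{c_1c_2L_0}$ plus $\mathbf{P}_{L_0}-\mathbf{P}_{c_2L_0}$ is the algebraic incarnation of the block-spin intuition that coarse-graining from $L_0$ to $c_1c_2L_0$ may be carried out by first passing through $c_2L_0$.

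The step I expect to be the real obstacle is the use of \autoref{eq:PIadj}: its hypothesis requires both propagator summands to be genuinely positive-definite, and monotonicity of the cut-off gives $\mathbf{P}_{c_2L_0}-\mathbf{P}_{c_1c_2L_0}\in\Symp(V)$ and $\mathbf{P}_{L_0}-\mathbf{P}_{c_2L_0}\in\Symp(V)$ only when $c_1,c_2>1$ strictly. So the composition law is established cleanly on the open sub-semigroup $(1,\infty)$; the boundary value $c=1$, where the relevant Gaussian degenerates to a Dirac mass so that $\tilde{W}[0,I]=I$, has to be supplied separately, and it yields $\mathcal{R}_1=\id$, upgrading the semigroup action to a monoid action of $\mathbb{R}^\times_{\ge1}$. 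Apart from dispatching that edge case, the argument is a short substitution of the already-proved identities.
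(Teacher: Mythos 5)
Your proof is correct and follows essentially the same route as the paper's: both commute the rescaling past $\tilde{W}$ via \autoref{eq:TactsonWtilde} and then collapse the nested $\tilde{W}$'s using \autoref{eq:PIadj} together with the telescoping of propagators and multiplicativity of $\mathbf{T}$ (the paper merely normalizes each $\mathcal{R}_c$ to the form $\tilde{W}[(\mathbf{T}_c^{-1}\otimes\mathbf{T}_c^{-1}-\id)\mathbf{P}_{L_0},\, I\circ\mathbf{T}_c]$ first, rather than pulling $\mathbf{T}_{c_2}$ out of the inner interaction slot as you do). Your side remarks --- that the determinant normalization in $\ActFun$ makes the $\mathrm{Cgrl}$ factorization differ from \autoref{eq:renormAction} by a scalar, and that \autoref{eq:PIadj} requires strict positivity of both summands so that the boundary case $c=1$ needs the separate convention $\tilde{W}[0,I]=I$ --- are accurate observations about points the paper passes over silently.
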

    
\begin{proof}
By \autoref{eq:TactsonWtilde}, we can rewrite the right-hand-side of \autoref{eq:renormAction} as:
\begin{equation}
    \mathcal{R}_c I
    = \tilde{W}[(\mathbf{T}_c^{-1} \otimes \mathbf{T}_c^{-1} - \id) \mathbf{P}_{L_0}, I \circ \mathbf{T}_c]
\end{equation}

We check that the purported action preserves products:
\begin{align}
    &\hskip -6pt \mathcal{R}_{c'} \mathcal{R}_{c} I \\
    &= \tilde{W}[(\mathbf{T}_{c'}^{-1} \otimes \mathbf{T}_{c'}^{-1} - \id) \mathbf{P}_{L_0}, \mathcal{R}_{c} I \circ \mathbf{T}_{c'}]\\
    &= \tilde{W}[(\mathbf{T}_{c'}^{-1} \otimes \mathbf{T}_{c'}^{-1} - \id) \mathbf{P}_{L_0}, \tilde{W}[(\mathbf{T}_c^{-1} \otimes \mathbf{T}_c^{-1} - \id) \mathbf{P}_{L_0}, I \circ \mathbf{T}_c] \circ \mathbf{T}_{c'}] \\
    &= \tilde{W}[(\mathbf{T}_{c'}^{-1} \otimes \mathbf{T}_{c'}^{-1} - \id) \mathbf{P}_{L_0}, \tilde{W}[(\mathbf{T}_{cc'}^{-1} \otimes \mathbf{T}_{cc'}^{-1} - \mathbf{T}_{c'}^{-1} \otimes \mathbf{T}_{c'}^{-1}) \mathbf{P}_{L_0}, I \circ \mathbf{T}_{cc'}]] \\
    &= \tilde{W}[(\mathbf{T}_{cc'}^{-1} \otimes \mathbf{T}_{cc'}^{-1} - \id) \mathbf{P}_{L_0}, I \circ \mathbf{T}_{cc'}]
\end{align}
\end{proof}

In \autoref{eq:WAction} we got an action \(\mathsf{GL} (V) \ltimes (\Symp (V), +)\) on \(C^{\infty} (V)\). Combining this with a rescaling of spacetime, we can describe renormalization as action of a one-parameter submonoid of \(\mathsf{GL} (V) \ltimes (\Symp (V), +)\) on \(C^{\infty} (V)\) specified by the composite:
\begin{equation} \label{eq:RenormSummary}
\begin{tikzcd}
    \mathbb{R}^\times_{\ge 1} \ar[r,"\mathbf{T}"] \ar[rrr, bend left, "\mathcal{R}"] & \mathsf{GL}(V) \ar[r, "\mathop{\mathrm{Ur}} (\mathbf{P}_{L_0})"] & \mathsf{GL}(V) \ltimes \Symp (V) \ar[r, "\mathop{\mathrm{Cgrl}}"] & \mathrm{End}_{\text{Set}}(C^\infty (V))\\
    c \ar[r, mapsto] & \mathbf{T}_c \ar[r, mapsto] & (\mathbf{T}_c, \mathbf{P}_{L_0} - \mathbf{P}_{cL_0}) \ar[r, mapsto] & (I\mapsto I_\text{eff} \circ \mathbf{T} _c)
\end{tikzcd}
\end{equation}
 
\section{Conclusions}\label{sec:conclusions}

To recap \autoref{Alg}, we defined the relevant groups $\mathrm{Heis}(V)$ and $\mathsf{Osc}(V)$ and showed some elementary facts about them. In \autoref{lem:HomImg}, we saw a homomorphism $An:\Symp(V) \to \Gamma(\mathrm{proj})$. 
We were able to define an action $\sigma$ of $\mathsf{Osc}(V)$ on $C^\infty(V)$, and in \autoref{thm:GaussChar} and \autoref{thm:GaussConv} we characterized $\mathcal{N}[\mathbf{C}]$ as the fixed point of $\sigma$ by elements in the image of $An(\mathbf{C})$, and further showed that $\mathcal{N}:\Symp(V)\to C^\infty(V)$ is a homomorphism.

Next, we showed that these constructions are compatible with the action of \(\mathsf{GL}(V)\) as summarized with the following diagram:
\begin{equation}
\begin{tikzcd}
\Symp(V) \ar[r, "\mathcal{N}"] \ar[d, "\mathsf{Act}_\mathrm{Sym}"'] & C^\infty(V) \ar[r, "\textrm{Conv}"] \ar[d, "\mathsf{Act}_{\mathrm{Fun}}"] & \mathrm{End}_\text{Set}(C^\infty(V)) \ar[d, "\ActFun\circ -"] \\
\Symp(V) \ar[r, "\mathcal{N}"] & C^\infty(V) \ar[r, "\textrm{Conv}"] & \mathrm{End}_\text{Set}(C^\infty(V))
\end{tikzcd}
\end{equation}

The vertical arrows $\ActSym$ and $\ActFun$ are defined in \autoref{def:ActSymSec}. The homomorphism \(\mathrm{Conv}\) consists of convolution with a smooth function.
Commutativity of the left square was shown in \autoref{lem:ActGaus}, commutativity of the right square is justified by \autoref{lem:ConvAct1} with $\mathbf{k}=0$. 
Composing across a row gives us the action noted in \autoref{rmk:CgAction}.
All of these results are used to justify why $\mathrm{Cgrl}$ in \autoref{eq:WAction} is a semigroup action $\mathrm{GL}(V)\ltimes \Symp(V)\curvearrowright \mathrm{End}_\text{Set}(C^\infty(V))$, which is used prominently in the renormalization procedure.

Let's recall the three steps of renormalization before matching the heuristic description with the algebra we've developed:

\begin{enumerate}
    \item We ``made blocks" by determining which energy scale $cL_0$ we're renormalizing to by writing \( P_{L_0} = (P_{L_0} - P_{cL_0}) + P_{cL_0} \).  
    \item We ``coarse grained" by using \autoref{eq:PIadj} to redo the interaction term as $I_{\text{eff}} = \tilde{W}[P_{L_0}-P_{cL_0},I]$, which can be used in the generating function:
    \begin{equation}
        \tilde{W}[P_{L_0},I](\mathbf{J}) = \tilde{W}[P_{c L_0},I_{\text{eff}}](\mathbf{J})
    \end{equation}
    \item We finish by rescaling using \autoref{lem:rescaling} and the transformation $\mathbf{T}_c:V\to V$ to get
    \begin{equation}
        W[P_{c L_0},I_{\text{eff}}](\mathbf{J}) = W[P_{L_0},\mathcal{R}_{c} I](\mathbf{J}\circ \mathbf{T}_c)
    \end{equation}
\end{enumerate}

We combine coarse graining and rescaling into the action $\mathrm{Cgrl}$ of \autoref{eq:WAction}, and incorporate making blocks by precomposing with $\mathrm{Ur}(\mathbf{P}_{L_0})\circ\mathbf{T}$ to specify the precise change in scale, resulting in the renormalization transform $\mathcal{R}$ of \autoref{eq:RenormSummary}.

Here we assumed the space of field configurations $V$ was finite-dimensional throughout. In future work, we will show how this generalizes to the infinite-dimensional case, in particular perturbative quantum field theory.  We will show how the differential equation corresponding to the action of the Lie algebra associated with $\mathrm{GL}(V)\ltimes \Symp(V)$ recovers the Polchinski flow equation, and we will validate the equivalence of this description with existing descriptions both in theory and by performing calculations in the $\phi^4$ theory.
 
\addcontentsline{toc}{section}{References}

\bibliographystyle{plain}
\bibliography{references}

\end{document}